\numberwithin{equation}{section}
\newtheorem{theorem}{Theorem}[section]
\newtheorem{lem}{Lemma}[section]
\newcounter{hypA}
\newenvironment{hypA}{\refstepcounter{hypA}\begin{itemize}
  \item[({\bf A\arabic{hypA}})]}{\end{itemize}}
\newcounter{hypB}
\newcounter{hypD}
\newenvironment{hypD}{\refstepcounter{hypD}\begin{itemize}
 \item[({\bf D\arabic{hypD}})]}{\end{itemize}}
\date{}
\begin{document}

\begin{center}

{\Large \textbf{Multilevel Particle Filters for a Class of Partially Observed Piecewise Deterministic Markov Processes}}

\vspace{0.5cm}

BY  AJAY JASRA$^{1}$, KENGO KAMATANI$^{2}$ \& MOHAMED MAAMA$^{1}$ 

{\footnotesize $^{1}$Applied Mathematics and Computational Science Program,  Computer, Electrical and Mathematical Sciences and Engineering Division, King Abdullah University of Science and Technology, Thuwal, 23955-6900, KSA.}\\
{\footnotesize $^{2}$Institute of Statistical Mathematics,  Tokyo 190-0014,  JP.}\\
{\footnotesize E-Mail:\,} \texttt{\emph{\footnotesize ajay.jasra@kaust.edu.sa, kamatani@ism.ac.jp, maama.mohamed@gmail.com}}

\end{center}

\begin{abstract}
In this paper we consider the filtering of a class of partially observed piecewise deterministic Markov processes (PDMPs).
In particular, we assume that an ordinary differential equation (ODE) drives the deterministic element and can only be solved numerically via a time discretization. We develop, based upon the approach in \cite{lemaire}, a new particle and multilevel particle filter (MLPF) in order to approximate the filter associated to the discretized ODE. We provide a bound on the mean square error associated to the MLPF which provides guidance on setting the simulation parameter of that algorithm and implies that significant computational gains can be obtained versus using a particle filter. Our theoretical claims are confirmed in several numerical examples.\\
\noindent \textbf{Key words}: Multilevel Monte Carlo, Particle Filters,  PDMPs, Filtering.
\end{abstract}

\section{Introduction}

Piecewise deterministic Markov processes (PDMPs) are a class of continuous-time stochastic processes that jump at random times and in-between jump times evolve deterministically. They have been considered in a variety of applications such as target tracking and neuroscience; see e.g.~\cite{bunch,godsill,lemaire}. In this article we consider the case where the deterministic component evolves according to an ordinary differential equation (ODE) which can only be solved numerically, using time discretization (e.g.~the Euler method). In addition, as in \cite{finke}, we consider that the process is only partially observed at discrete and regular times.

The filtering, i.e.~the recursive estimation of the hidden process at pre-specified times,
of discretely observed continuous-time processes is a notoriously challenging problem; see e.g.~\cite{bain,delm_book} for several examples. The main barrier to their use is the lack of analytical tractability of the filter and this gives rise to the application of numerical methods. In particular, one of the most often used methods in such context is the particle filter (e.g.~\cite{delm_book}), which is a numerical technique that generates a collection of $S\in\mathbb{N}$ samples in parallel, and which undergo resampling and sampling operations. Such methods can be used to approximate expectations w.r.t.~the filtering distribution and these approximations are asymptotically consistent (i.e.~as $S\rightarrow+\infty$ the approximations converge in an appropriate sense); see \cite{delm_book}.

In the case of piecewise deterministic processes (PDPs) several particle filter and sequential Monte Carlo (e.g.~\cite{delm:06}) methods have appeared, for example in \cite{bunch,godsill,whiteley}. In this article, as mentioned previously, we focus upon a particular type of PDMP, which was extensively investigated in \cite{lemaire}. This PDMP includes an ODE which can only be solved by using numerical methods and we will develop particle filters for this type of model. The particle filter we use is essentially the most basic bootstrap approach, but with the caveat that one is using a time-discretized solution of the ODE. This latter point naturally points to using the well-known multilevel Monte Carlo (MLMC) method \cite{giles,giles1,hein} that can help to improve estimation of expectations where the associated probability law is subject to time discretization; this is a point that had been realized by \cite{lemaire}.

MLMC methods work by considering a collection of time-discretized expectations of increasingly more accurate approximations. The idea is then to consider a telescoping sum of difference of the expectations, equal to the most precise approximation and then to use Monte Carlo methods to approximate each difference. The key point of the methods is to perform exact simulation from a coupling of the probability laws at consecutive levels. If such a coupling is good enough (see e.g.~\cite{giles}) the cost to compute the expectation of interest for a given mean square error (MSE) between the estimator to the true expectation, relative to a using just one level, can be reduced by several orders of magnitude. This idea has been extended from forward problems (without data), as are considered in \cite{giles,lemaire}, to models which combine the forward problems with real data; see \cite{beskos,mlpf,anti_pf,mlpf1,levymlpf,ml_rev}. Despite the work in the afore-mentioned articles, we do not know of any method that has dealt with the type of partially observed PDMP considered in this paper.

In this article, we develop a new multilevel particle filter (MLPF) for the filtering of a class of PDMPs. Critical to our approach is the clever change of measure method that is used in \cite{lemaire}. In that article, the authors consider the PDMP by itself and considering estimating differences of expectations at different levels of accuracy in terms of the numerical approximation of the ODE solver. They realize that in order to couple well samples of the PDMP at consecutive levels, the processes should jump at the same times. In order to ensure that this is so, they use a change of measure at the courser time discretization, so that the process jumps at the same time as the finer time discretization. We show how this method can be embedded within the MLPF framework that was originally developed in \cite{mlpf}. We also consider a mathematical analysis of our new filtering estimator.
Our mathematical results show how to choose the maximum level and the number of samples needed at each consecutive level difference, so as to obtain a MSE of $\mathcal{O}(\epsilon^2)$, $\epsilon>0$. The MSE is the difference of our estimator and the true (without time discretization) filtering expectation. If $\Delta_l=2^{-l}$ represents the time discretization of the ODE solver, typically the cost of using many solvers (e.g.~the Euler method) is $\mathcal{O}(\Delta_l^{-1})$. If one can assume this is the cost of sampling the time-discretized PDMP - and this is uncertain - then our theoretical results indicate that the cost to achieve an MSE of $\mathcal{O}(\epsilon^2)$ is $\mathcal{O}(\epsilon^{-2}\log(\epsilon)^2)$ versus using a particle filter, which would cost $\mathcal{O}(\epsilon^{-3})$ for the same MSE. We then investigate these claims numerically in several examples.

This article is structured as follows. In Section \ref{sec:method} we give precise details on the model and our algorithm.
In Section \ref{sec:theory} we present our mathematical results. In Section \ref{sec:numerics} we investigate our approach 
in several numerical examples. The appendix houses all of our mathematical proofs with assumptions.

\section{Model and Methodology}\label{sec:method}

\subsection{Process}

The presentation of the current section follows the exposition of \cite{lemaire} very closely,  as this latter work focusses exactly on the processes that are considered in that article.

Roughly, a PDMP is continuous-time stochastic process that will move deterministically in-between random (jump) times. At the jump times there may also be transitions of the stochastic process.  More precisely, we will need the following elements:
\begin{itemize}
\item{$\mathsf{X}:=\mathsf{U}~\times~\mathbb{R}^d$ as the space of the PDMP, where $\mathsf{U}$ is a countable set, with possible infinite cardinality and $d\in\mathbb{N}$ is fixed. Also, let $\mathscr{X}$ be its Borel $\sigma$-algebra.}
\item{A differential equation, for any fixed $x\in\mathsf{X}$:
\begin{equation}\label{eq:ode}
\nabla_t\Phi(x,t) = f\left(\Phi(x,t)\right)
\end{equation}
where $\Phi:\mathsf{X}~\times~\mathbb{R}^+\rightarrow\mathsf{X}$,  $f:\mathsf{X}\rightarrow\mathsf{X}$,
$\Phi(x,0)=x$ is given.  
It is assumed that, for any fixed $x\in\mathsf{X}$, there exists a unique solution to \eqref{eq:ode} and conditions to ensure this are discussed later on.
}
\item{A bounded and measurable function $\lambda:\mathsf{X}\rightarrow(0,\lambda^{\star}]$, $0<\lambda^{\star}<+\infty$.}
\item{A transition kernel $Q:\mathsf{X}\times\mathscr{X}\rightarrow[0,1]$. 
}
\end{itemize}
The upper-bound $\lambda^{\star}$ is a necessary component of the methodology to be described.

The $\mathsf{X}$-valued PDMP, $(X_t)_{t\in[0,T]}$, is then generated on a time interval $[0,T]$, $X_t=(U_t,V_t)$, with $X_0=(u_0,v_0)$ given as described in Algorithm \ref{alg:pdmp}, with $0=t_0$ and $T=t_1$. In Algorithm \ref{alg:pdmp} we use the notation $\mathcal{U}_{[0,1]}$ to denote the Uniform distribution on $[0,1]$ and $\mathcal{E}(\lambda^{\star})$ to denote the exponential distribution of parameter $\lambda^{\star}$.
Throughout, we assume $T\in\mathbb{N}$ is `total length' of the PDMP and that $T$ is so large that performing filtering (to be defined later on) that essentially one may not reach $T$ in a practical problem. This is not necessarily a challenge as time can be rescaled to reflect this assumption.

\begin{algorithm}
\begin{enumerate}
\item{Set $k=0$, $k^{\star}=1$,  initial time $0\leq t_0$, final time $t_0< t_1\leq T$, $T^{\star}_0=T_0=t_0
$, and initial point $x_{t_0}=(u_{t_0},v_{t_0})$.
}
\item{Generate $I_{k^{\star}}^{\star}\sim\mathcal{E}(\lambda^{\star})$ and set $T^{\star}_{k^{\star}}=T^{\star}_{k^{\star}-1}+I_{k^{\star}}^{\star}$. If $T^{\star}_{k^{\star}}>t_1$ set $X_s=\Phi(X_{T_{k}},s-T_{k})$ for every $s\in(T_{k},t_1]$
and go to step 4., otherwise go to the next step.}
\item{Generate $W_{k^{\star}}\sim\mathcal{U}_{[0,1]}$ 
\begin{itemize}
\item{
If
\begin{equation}\label{eq:alg_dec}
W_{k^{\star}} \leq \frac{\lambda\left(\Phi(x_{T_k},T^{\star}_{k^{\star}}-T_k)\right)}{\lambda^{\star}}
\end{equation}
then set $k\leftarrow k+1$, $\tau_k=k^{\star}$, 
$T_k=T^{\star}_{k^{\star}}$, $X_s=\Phi(X_{T_{k-1}},s-T_{k-1})$ for every $s\in(T_{k-1},T_k)$. Generate $X_{T_k}\sim Q\left(X_{T_k-},~\cdot~\right)$. Set $k^{\star}\leftarrow k^{\star}+ 1$ and go to step 2..}
\item{Otherwise set $k^{\star}\leftarrow k^{\star}+ 1$ and go to step 2..}
\end{itemize}
}
\item{Return the process $(X_t)_{t\in[t_0,t_1]}$ and $N_{[t_0,t_1]}=k$ and $N^{\star}_{[t_0,t_1]}=k^{\star}-1$.}
\end{enumerate}
\caption{Simulation of a PDMP.  Some notations are defined in the main text.}
\label{alg:pdmp}
\end{algorithm}

The expression of the process in Algorithm \ref{alg:pdmp} is based upon a Poisson thinning technique explained in detailed in \cite{lemaire}. Therefore, the upper-bound of $\lambda$ is critical since the jump times $T_k$ are generated by thinning a Poisson process jumps $T^*_{k}$ of rate $\lambda^*$.   In practice, of course, it is sufficient to work with the process at jump times and `fill in' any points that are necessary.  In Algorithm \ref{alg:pdmp}, $N_{[t_0,t_1]}$ and $N_{[t_0,t_1]}^{\star}$ represent the number of jump times on $[t_0,t_1]$ of the PDMP and of a Poisson process of rate $\lambda^{\star}$. If $t_0=0$, we write $N_{t_1}$ and $N_{t_1}^{\star}$ as a short-hand.
The simulation in Algorithm \ref{alg:pdmp} also requires that one can solve the equation \eqref{eq:ode}, which we shall assume is not possible exactly. 

\subsubsection{Uniqueness and Discretization of the Solution to \eqref{eq:ode}}

In order to proceed further, we shall assume the following.
\begin{hypD}\label{hyp:d1}
$f:\mathsf{X}\rightarrow\mathsf{X}$ as on the R.H.S.~of \eqref{eq:ode} has the following properties:
\begin{itemize}
\item{$\sup_{x\in\mathsf{X}}|f(x)|<+\infty$, where $|\cdot|$ is the $\mathbb{L}_1-$norm.}
\item{$f(u,\cdot)$ is globally differentiable with continuous first derivative for each $u\in\mathsf{U}$.}
\item{There exists a $C<+\infty$ such that for any $u\in\mathsf{U}, (v,v')\in\mathbb{R}^{2d}$, $|f(u, v)-f(u, v')|\leq C|v-v'|$.}
\item The flow does not change the value of $u$, that is, $\pi_{\mathsf{U}}(f(x))=0$ for any $x\in\mathsf{X}$, where $\pi_{\mathsf{U}}(u,v)=u$. 
\end{itemize}  
The Markov kernel $Q$ satisfies the following: 
\begin{itemize}
    \item The jump does not change the value of $v$, that is, $Q(x,A^c)=0$ where $A=\{(u^*,v^*): v^*\neq v\}$ and $x=(u,v)$. 
\end{itemize}
\end{hypD}
Assumption (D\ref{hyp:d1}) is enough to ensure a unique solution to \eqref{eq:ode}. It is assumed to hold from herein and, for instance, is omitted from any statements of our mathematical results.

We now proceed to introduce a time discretization to solve \eqref{eq:ode}.  We adopt the well-known Euler method,
although several more advanced approaches such as (higher-order) Runge--Kutta methods could be employed with little difficulty.  Let $\Delta_l=2^{-l}$, $l\in\mathbb{N}$ be given and consider two time points $t_0,t_1$ such that $0\le t_0<t_1\le T$. Let 
$J:=\max\{\lfloor(t_1-t_0)\Delta_l^{-1}\rfloor-1,0\}$. For each $x\in\mathsf{X}$ and an integer $0\le j\le J$, consider a discretized flow of $\Phi$ by 
$$
\phi^l(x,(j+1)\Delta_l) = \phi^l(x,j\Delta_l) + f(\phi^l(x,j\Delta_l))\Delta_l.
$$
Then our approximate solution is taken as, for $x\in\mathsf{X}, 0\le j\le J$ and for $j\Delta_l\le t< (j+1)\Delta_l$
\begin{equation}
\label{eq:euler}
\Phi_{[t_0,t_1]}^l(x,t) = \phi^l(x,j\Delta_l) + f(\phi^l(x,t_0+j\Delta_l))~\{t-j\Delta_l\}.
\end{equation}
Then to generate a discretized approximation of the process in Algorithm \ref{alg:pdmp}, one can simply replace the solution of \eqref{eq:ode}, with the discretized solution given in \eqref{eq:euler}. More precisely, in \eqref{eq:alg_dec}
one uses $\Phi_{[T_k^l,T^{\star,l}_{k^{\star,l}}]}^l(x_{T_k^l}^l,T^{\star,l}_{k^{\star,l}}-T_k^l)$ instead of 
$\Phi(x_{T_k},T^{\star}_{k^{\star}}-T_k)$. To further clarify,
we denote the piecewise deterministic process (PDP) that is produced by Algorithm \ref{alg:pdmp}, when replacing the solution of \eqref{eq:ode}
with the solution of \eqref{eq:euler} as $(X_t^l)_{t\in[t_0,t_1]}$ and the all the associated notations will also be given a supersrcipt $l$. For instance the event times of the homogeneous Poisson process will be denoted $(T_1^{\star,l},\dots,T_{N_{[t_0,t_1]}^{\star,l}}^{\star,l})$.  We remark that, as explained in \cite[Section 2.3]{lemaire}, whilst the process in Algorithm \ref{alg:pdmp} indeed defines a PDMP, the associated discretized approximation, is only a PDP.

\subsubsection{Coupling Discretizations}

As it will be critical when describing Multilevel methods, consider $\varphi:\mathsf{X}\rightarrow\mathbb{R}$,  with $\varphi$ bounded and measurable (we write such functions as $\mathcal{B}_b(\mathsf{X})$ from herein),
and simulating two discretized processes $(X_t^l)_{t\in[t_0,t_1]}$ and $(X_t^{l-1})_{t\in[t_0,t_1]}$,  $l\in\{2,3,\dots\}$, $0\leq t_0<t_1\leq T$,  where we shall emphasize that the initial points need not be identical. Suppose that one wishes to approximate:
$$
\mathbb{E}^l_{x_{t_0}^l}[\varphi(X_{t_1}^l)] - \mathbb{E}^{l-1}_{x_{t_0}^{l-1}}[\varphi(X_{t_1}^{l-1})]
$$
where $\mathbb{E}_{x}^s[~\cdot~]$, $s\in\{l-1,l\}$, denotes an expectation w.r.t.~the law of the process $(X_t^s)_{t\in[t_0,t_1]}$ starting from $X_{t_0}^s=x$. Clearly, one approach is to sample the processes $(X_t^s)_{t\in[t_0,t_1]}$,  $s\in\{l-1,l\}$, $N-$times ($N\in\mathbb{N}$) independently and then use Monte Carlo approximation. However, in order for MLMC methods to work well, it is important to sample a dependent coupling of the joint laws of $(X_t^s)_{t\in[t_0,t_1]}$,  $s\in\{l-1,l\}$, or to sample one process and correct using importance sampling; this is the latter strategy that is employed in \cite{lemaire} and the one we shall follow.

Below, the following notations will be used and will help to facilitate the description of our algorithms. We set, for 
$l\in\{2,3,\dots\}$
\begin{eqnarray*}
\Xi_{t_0,t_1}^l & := & \left(N_{[t_0,t_1]}^{\star,l}, N_{[t_0,t_1]}^l, (T_0^{\star,l},\dots,T_{N_{[t_0,t_1]}^{\star,l}}^{\star,l}),(\tau_1^l,\dots,\tau_{N_{[t_0,t_1]}^l}^l), (X^l_{T_1^l},\dots,X^l_{T_{N^l_{[t_0,t_1]}}}),
(X^{l-1}_{T_1^l},\dots,X^{l-1}_{T_{N^l_{[t_0,t_1]}}})\right) \\
\mathbf{X}_{t_0}^l & := & (X_{t_0}^l,X_{t_0}^{l-1}) \\
\mathbf{X}_{t_1}^l & := & (X_{t_1}^l,X_{t_1}^{l-1}).
\end{eqnarray*}
We note that in $\Xi_{t_0,t_1}^l$ the jump times $(T_1^l,\dots,T_{N_{[t_0,t_1]}^l}^{l})$ of the PDP can be inferred on the basis of $(T_0^{\star,l},\dots,T_{N_{[t_0,t_1]}^{\star,l}}^{\star,l}),(\tau_1^l,\dots,\tau_{N_{[t_0,t_1]}^l}^l)$ and hence are not added explicitly (except in some subscripts, for readability).

One of the approaches of \cite{lemaire}, which is the one we shall focus upon, is simply to generate the process 
$(X_t^l)_{t\in[t_0,t_1]}$ and then correct the expectation $\mathbb{E}_{x_{t_0}^{l-1}}[\varphi(X_{t_1}^{l-1})]$ appropriately. More precisely, one has the identity:
\begin{equation}\label{eq:id}
\mathbb{E}^l_{x_{t_0}^l}[\varphi(X_{t_1}^l)] - \mathbb{E}^{l-1}_{x_{t_0}^{l-1}}[\varphi(X_{t_1}^{l-1})] = 
\mathbb{E}^l_{x_{t_0}^l}[\varphi(X_{t_1}^l)] - \mathbb{E}^l_{x_{t_0}^{l}}\left[\varphi(X_{t_1}^{l-1})R_{t_0,t_1}^l(\mathbf{X}_{t_0}^l,\Xi_{t_0,t_1}^l,\mathbf{X}_{t_1}^l)\right]
\end{equation}
where $R_{t_0,t_1}^l(\mathbf{X}_{t_0}^l,\Xi_{t_0,t_1}^l,\mathbf{X}_{t_1}^l)$ is defined via (see \cite[Corollary 2.2]{lemaire}):
\begin{eqnarray}
R_{t_0,t_1}^l(\mathbf{X}_{t_0}^l,\Xi_{t_0,t_1}^l,\mathbf{X}_{t_1}^l) & = & Z_{N_{[t_0,t_1]}^l}^l\prod_{p=0}^{N_{[t_0,t_1]}^l-1}Z_p^l\label{eq:rt_def1}
\end{eqnarray}
where
\begin{eqnarray}
Z_p^l  = 
\frac{\tfrac{\lambda\left(\Phi_{[T_{p}^{l},T_{p+1}^{l}]}^{l-1}(x_{T_{p}^{l}}^{l-1},T_{p+1}^{l}-T_{p}^{l})\right)}{\lambda^{\star}}\prod_{q=\tau_p^l+1}^{\tau_{p+1}^l-1}\left\{
1-
\tfrac{\lambda\left(\Phi_{[T_{p}^{l},T_{q}^{\star,l}]}^{l-1}(x_{T_{p}^{l}}^{l-1},T_{q}^{\star,l}-T_{p}^{l})\right)}{\lambda^{\star}}
\right\}}{
\tfrac{\lambda\left(\Phi^{l}_{[T_{p}^{l},T_{p+1}^{l}]}(x_{T_{p}^{l}}^{l},T_{p+1}^{l}-T_{p}^{l})\right)}{\lambda^{\star}}\prod_{q=\tau_p^l+1}^{\tau_{p+1}^l-1}\left\{
1-\tfrac{
\lambda\left(\Phi_{[T_{p}^{l},T_{q}^{\star,l}]}^{l}(x_{T_{p}^{l}}^{l},T_{q}^{\star,l}-T_{p}^{l})\right)
}{\lambda^{\star}}
\right\}
} 
\frac{Q\left(x_{T_{p+1}^{l}-}^{l-1},x_{T_{p+1}^{l}}^{l-1}\right)}{Q\left(x_{T_{p+1}^{l}-}^l,x_{T_{p+1}^{l}}^l\right)}
\end{eqnarray}
for $p\in\{0,\dots,N_{[t_0,t_1]}^l-1\}$ and 
\begin{eqnarray}
Z_{N_{[t_0,t_1]}^l}^l & = &
\frac{\prod_{q=\tau_{N_{[t_0,t_1]}^{l}}^l+1}^{N_{[t_0,t_1]}^{\star,l}}\left\{
1-
\tfrac{\lambda\left(\Phi_{[T_{N_{[t_0,t_1]}^{l}}^{l},T_{q}^{\star,l}]}^{l-1}(x_{T_{N_{[t_0,t_1]}^{l}}^{l}}^{l-1},T_{q}^{\star,l}-
T_{N_{[t_0,t_1]}^l}^{l})\right)}{\lambda^{\star}}
\right\}}{
\prod_{q=\tau_{N_{[t_0,t_1]}^l}^l+1}^{N_{[t_0,t_1]}^{\star,l}}\left\{
1-\tfrac{
\lambda\left(\Phi_{[T_{N_{[t_0,t_1]}^{l}}^{l},T_{q}^{\star,l}]}^{l}(
x_{T_{N_{[t_0,t_1]}^{l}}^l}^{l},T_{q}^{\star,l}-T_{N_{[t_0,t_1]}^l}^{l})\right)
}{\lambda^{\star}}
\right\}
}
\label{eq:rt_def4}
\end{eqnarray}
where we have used the abuse of notation that $Q$ denotes both the kernel and the positive density (w.r.t.~ the product of the Lebesgue measure and the counting measure) of $Q$, and that for $p\in\{1,\dots,N_{[t_0,t_1]}^l\}$, $u_{T_{p}^l}^{l-1}=u_{T_{p}^l}^{l}$.

In words \eqref{eq:id} essentially says that the jump times of the two processes are same, as are the values of the discrete process (except possibly at the initial time) and that $R_{t_0,t_1}^l$ is the Radon--Nikodym derivative used to account for the discrepancy of the target process at level $l-1$ and the sampled process. We are using a slight abuse of notation as the $(X_t^{l-1})_{t\in(t_0,t_1]}$ that is used in the expectation operator on the R.H.S.~of \eqref{eq:id} is not the one as if one had run Algorithm \ref{alg:pdmp} with $\Phi^{l-1}$, but the one as described previously. The main reason for this is to avoid notational overload.

\subsection{Filter and Multilevel Identity}

We begin by considering the filter associated to a partially observed PDMP, followed by the associated time discretized filter and finally the multilevel identity that we intend to approximate.

We will consider discrete time data that are regularly observed at unit times; this latter hypothesis is for notational convenience and, indeed, one can modify the ideas of this article to any regular time interval, between data.
Our data are observations of the sequence of random variables $(Y_1,Y_2,\dots)$, with $Y_n\in\mathsf{Y}$, with the latter space being left abstract for now.  For $p\in\mathbb{N}$, we will write $M_p:\mathsf{X}\times\mathscr{X}\rightarrow[0,1]$, where $\mathscr{X}$ is the $\sigma-$field generated by $\mathsf{X}$, as the transition kernel induced by the exact PDMP (i.e.~as described in Algorithm \ref{alg:pdmp}) from time $p-1$ to time $p$. Then we shall assume for any $(n,A)\in\mathbb{N}\times\mathscr{Y}$ ($\mathscr{Y}$ is the $\sigma-$ field generated by $\mathsf{Y}$)
that 
$$
\mathbb{P}(Y_n\in A\mid\{X_t\}_{t\in[0,n]},y_1,\dots,y_{n-1},y_{n+1},\dots) = \int_{A} g(x_n,y)dy
$$
where $dy$ is a dominating measure (often Lebesgue).
Then,  we can define the filter as, for $(n,\varphi)\in\mathbb{N}\times\mathcal{B}_b(\mathsf{X})$
$$
\pi_n(\varphi) := \frac{\int_{\mathsf{X}^n}\varphi(x_n)\left\{\prod_{p=1}^n g(x_p,y_p)\right\}\prod_{p=1}^n M_p(x_{p-1},dx_p)}{
\int_{\mathsf{X}^{n}}\left\{\prod_{p=1}^n g(x_p,y_p)\right\}\prod_{p=1}^n M_p(x_{p-1},dx_p)}
$$

As we cannot work with the PDMP directly,  we denote for $(p,l)\in\mathbb{N}^2$, $M_p^l:\mathsf{X}\times\mathscr{X}\rightarrow[0,1]$ as the transition kernel of the time-discretized approximation from time $p-1$ to time $p$ of the PDMP. Then,
we will define the following approximation, which is now our focus,
for $(n,\varphi,l)\in\mathbb{N}\times\mathcal{B}_b(\mathsf{X})\times\mathbb{N}$
$$
\pi_n^l(\varphi) := \frac{\int_{\mathsf{X}^n}\varphi(x_n)\left\{\prod_{p=1}^n g(x_p,y_p)\right\}\prod_{p=1}^n M_p^l(x_{p-1},dx_p)}{
\int_{\mathsf{X}^{n}}\left\{\prod_{p=1}^n g(x_p,y_p)\right\}\prod_{p=1}^n M_p^l(x_{p-1},dx_p)}.
$$
We shall show, in Lemma \ref{lem:7} in the appendix, that for $(n,\varphi)\in\mathbb{N}\times\mathcal{B}_b(\mathsf{X})$
$$
\lim_{l\rightarrow\infty} \pi_n^l(\varphi) = \pi_n(\varphi).
$$
Now for $L\in\mathbb{N}$, given,  our objective is to approximate the identity, for $(n,\varphi)\in\mathbb{N}\times\mathcal{B}_b(\mathsf{X})$
\begin{equation}\label{eq:ml_id}
\pi_n^L(\varphi) = \pi_n^1(\varphi) + \sum_{l=2}^L\left\{\pi_n^l(\varphi)-\pi_n^{l-1}(\varphi)\right\}.
\end{equation}
Now,  we can take advantage of the identity \eqref{eq:id} to assist in the approximation of \eqref{eq:ml_id}.
First, consider the Radon--Nikodym derivative $R_{t_0,t_1}^l$ as defined in the system of equations \eqref{eq:rt_def1}-\eqref{eq:rt_def4}; we shall consider this object over a unit time.  Second, write the expectation w.r.t.~the discretized process at level $l$ as $\mathbb{E}^l$. Then it is simple to show that
\begin{equation}\label{eq:ml_id_com}
\pi_n^l(\varphi)-\pi_n^{l-1}(\varphi) = \frac{\mathbb{E}^l\left[\varphi(X_n^l)\left\{\prod_{p=1}^n g(X_p^l,y_p)\right\}\right]}{\mathbb{E}^l\left[\left\{\prod_{p=1}^n g(X_p^l,y_p)\right\}\right]} -
\frac{\mathbb{E}^l\left[\varphi(X_n^{l-1})\left\{\prod_{p=1}^n R_{p-1,p}^l(\mathbf{X}_{p-1}^l,\Xi_{p-1,p}^l,\mathbf{X}_{p}^l)g(X_p^l,y_p)\right\}\right]}{\mathbb{E}^l\left[\left\{\prod_{p=1}^n R_{p-1,p}^l(\mathbf{X}_{p-1}^l,\Xi_{p-1,p}^l,\mathbf{X}_{p}^l)g(X_p^l,y_p)\right\}\right]}.
\end{equation}
Note that in the expectation operator including the terms $\mathbf{X}_{p-1}^l$ and $\mathbf{X}_{p}^l$, that across the levels, the random variables are equal (e.g.~$X_{p-1}^l=X_{p-1}^{l-1}$); this may not be the case in the approximation that we will generate and is thus the reason for this notation.
Therefore,  we can rewrite \eqref{eq:ml_id} as 
\begin{eqnarray}
\pi_n^L(\varphi) & = & \pi_n^1(\varphi) + \sum_{l=2}^L\Bigg\{
\frac{\mathbb{E}^l\left[\varphi(X_n^l)\left\{\prod_{p=1}^n g(X_p^l,y_p)\right\}\right]}{\mathbb{E}^l\left[\left\{\prod_{p=1}^n g(X_p^l,y_p)\right\}\right]} - \nonumber\\ & &
\frac{\mathbb{E}^l\left[\varphi(X_n^{l-1})\left\{\prod_{p=1}^n R_{p-1,p}^l(\mathbf{X}_{p-1}^l,\Xi_{p-1,p}^l,\mathbf{X}_{p}^l)g(X_p^l,y_p)\right\}\right]}{\mathbb{E}^l\left[\left\{\prod_{p=1}^n R_{p-1,p}^l(\mathbf{X}_{p-1}^l,\Xi_{p-1,p}^l,\mathbf{X}_{p}^l)g(X_p^l,y_p)\right\}\right]}\Bigg\}.\label{eq:ml_id_new}
\end{eqnarray}
We shall explain how to approximate the R.H.S.~of \eqref{eq:ml_id_new} and why it is preferable to considering
\eqref{eq:ml_id} in the next section. 

\subsection{Particle Filter and Multilevel Particle Filter}

We begin by considering the particle filter, which can be used to approximate $\pi_n^1(\varphi)$ for any $n\in\mathbb{N}$. It is described in Algorithm \ref{alg:pf}. To approximate $\pi_n^1(\varphi)$,  at step 2.~of
Algorithm \ref{alg:pf}, once $(G_n^{1,1},\dots,G_n^{S,1})$ has been computed, but before resampling has been performed, one can use the estimate:
$$
\pi_n^{S,1}(\varphi) := \sum_{i=1}^{S} G_n^{i,1}\varphi(X_n^{i,1}).
$$

\begin{algorithm}
\begin{enumerate}
\item{Initialize: Generate $(X_1^{1,1},\dots,X_1^{S,1})$ independently from $M_1^1(x_0,\cdot)$. Set $n=1$ and go to step 2..}
\item{Resampling: For $i\in\{1,\dots,S\}$ compute
$$
G_n^{i,1} = \frac{g(x_n^{i,1},y_n)}{\sum_{j=1}^{S} g(x_n^{j,1},y_n)}.
$$
Sample with replacement from $(x_n^{1,1},\dots,x_n^{S,1})$, using the $(G_n^{1,1},\dots,G_n^{S,1})$ and denote the resulting samples $(\hat{x}_n^{1,1},\dots,\hat{x}_n^{S,1})$ also.  Go to step 3..}
\item{Sampling: For $i\in\{1,\dots,S\}$ sample, conditionally independently,  $X_{n+1}^{i,1}\mid\hat{x}_n^{i,l}$ from 
$M_{n+1}^1(x_n^{i,1},\cdot)$. 
Set $n=n+1$ and go to step 2..}
\end{enumerate}
\caption{Particle Filter}
\label{alg:pf}
\end{algorithm}

To approximate the R.H.S.~of \eqref{eq:ml_id_com}, one can use Algorithm \ref{alg:mlpf}, which is simply a version of the algorithm developed in \cite{mlpf}. To approximate $\pi_n^l(\varphi)-\pi_n^{l-1}(\varphi)$,  at step 2.~of
Algorithm \ref{alg:mlpf}, once $(G_n^{1,l},\dots,G_n^{S_l,l})$ and $(\overline{G}_n^{1,l-1},\dots,\overline{G}_n^{S_l,l-1})$ have been computed, but before resampling has been performed, one can use the estimate:
$$
\pi_n^{S_l,l}(\varphi) - \overline{\pi}_n^{S_l,l-1}(\varphi) := \sum_{i=1}^{S_l} G_n^{i,l}\varphi(X_n^{i,l}) - \sum_{i=1}^{S_l} \overline{G}_n^{i,l-1}\varphi(\overline{X}_n^{i,l-1}).
$$

Therefore our procedure for estimating $\pi_p(\varphi)$ recursively in time is:
\begin{enumerate}
\item{Run Algorithm \ref{alg:pf}.}
\item{For each level $l\in\{2,\dots,L\}$ independently of step 1.~and all other levels, run Algorithm \ref{alg:mlpf}.}
\end{enumerate}
Thus, our multilevel estimator of $\pi_n(\varphi)$ is
$$
\pi_n^{S_{1:L}}(\varphi) := \pi_n^{S_1,1}(\varphi) + \sum_{l=2}^L\left\{\pi_n^{S_l,l}(\varphi) - \overline{\pi}_n^{S_l,l-1}(\varphi)\right\}.
$$
What remains is how to choose $L$ and $(S_1,\dots,S_L)$, which is the topic of the next section. Below we use $\mathbb{\overline{E}}$ to denote expectation w.r.t.~the law of the simulated algorithm as described in the above points 1.~\& 2..

\begin{algorithm}
\begin{enumerate}
\item{Initialize: Generate $((X_1^{1,l},\Xi_{0,1}^{1,l}),\dots,(X_1^{S_l,l},\Xi_{0,1}^{S_l,l}))$ using the time discretized version of Algorithm \ref{alg:pdmp} on $[0,1]$.  For $i\in\{1,\dots,S_l\}$, compute $\overline{x}_1^{i,l-1}$ and
$\mathbf{x}_{0}^{i,l}=(x_0,x_0)$,  $\mathbf{x}_{1}^{i,l}=(x_1^{i,l},\overline{x}_1^{i,l-1})$.
 Set $n=1$ and go to step 2..}
\item{Resampling: For $i\in\{1,\dots,S_l\}$ compute
\begin{eqnarray*}
G_n^{i,l} & = & \frac{g(x_n^{i,l},y_n)}{\sum_{j=1}^N g(x_n^{j,l},y_n)} \\
\overline{G}_n^{i,l-1} & = & \frac{g(\overline{x}_n^{i,l-1},y_n)R_{n-1,n}^l(\mathbf{x}_{n-1}^{i,l},\Xi_{n-1,n}^{i,l},\mathbf{x}_{n-1}^{i,l})}{\sum_{j=1}^N g(\overline{x}_n^{j,l-1},y_n)R_{n-1,n}^l(\mathbf{x}_{n-1}^{j,l},\Xi_{n-1,n}^{j,l},\mathbf{x}_{n-1}^{j,l})}.
\end{eqnarray*}
Sample with replacement from $(x_n^{1,l},\dots,x_n^{S_l,l})$, and 
$(\overline{x}_{n}^{1,l-1},\dots,\overline{x}_{n}^{S_l,l-1})$
 using a maximal coupling of the $(G_n^{1,l},\dots,G_n^{S_l,l})$ and $(\overline{G}_n^{1,l-1},\dots,\overline{G}_n^{S_l,l-1})$
 (see Algorithm \ref{alg:max_coup})
and denote the resulting samples 
$(\hat{x}_n^{1,l},\dots,\hat{x}_n^{S_l,l})$, and 
$(\hat{\overline{x}}_{n}^{1,l-1},\dots,\hat{\overline{x}}_{n}^{S_l,l-1})$ also, 
with $\mathbf{x}_n^{i,l}=(\hat{x}_n^{i,l},\hat{\overline{x}}_n^{i,l-1})$, $i\in\{1,\dots,S_l\}$.  Go to step 3..
}
\item{Sampling: Generate $((X_{n+1}^{1,l},\Xi_{n,n+1}^{1,l}),\dots,(X_{n+1}^{S_l,1},\Xi_{n,n+1}^{S_l,l}))$ using the time discretized version of Algorithm \ref{alg:pdmp} on $[n,n+1]$ with starting points $(\hat{x}_n^{1,l},\dots,\hat{x}_n^{S_l,l})$.  For $i\in\{1,\dots,S_l\}$, compute $\overline{x}_{n+1}^{i,l-1}$ and, 
$\mathbf{x}_{n+1}^{i,l}=(x_{n+1}^{i,l},\overline{x}_{n+1}^{i,l-1})$.
 Set $n=n+1$ and go to step 2..}
\end{enumerate}
\caption{Coupled Particle Filter}
\label{alg:mlpf}
\end{algorithm}

\begin{algorithm}
\begin{enumerate}
\item{Input: $(D_1^{1},\dots,D_1^S), (D_2^1,\dots,D_2^{S})$ and associated probabilities 
$(W_1^{1},\dots,W_1^S), (W_2^1,\dots,W_2^{S})$.}
\item{For $i\in\{1,\dots,S\}$ generate $\tilde{U}\sim\mathcal{U}_{[0,1]}$ (uniform distribution on $[0,1]$)
\begin{itemize}
\item{If $\tilde{u}<\sum_{i=1}^S \min\{W_1^i,W_2^i\}$ generate $a^i\in\{1,\dots,S\}$ using the probability mass function
$$
\mathbb{P}(i) = \frac{\min\{W_1^i,W_2^i\}}{\sum_{j=1}^S \min\{W_1^j,W_2^j\}}
$$
and set $\tilde{D}_j^i=D_j^{a^i}$, 
$j\in\{1,2\}$.}
\item{Otherwise generate $(a_1^i,a_2^i)\in\{1,\dots,S\}^2$ independently via the probability mass functions:
$$
\mathbb{P}_j(i) = \frac{W_j^i-\min\{W_1^i,W_2^i\}}{\sum_{k=1}^S[W_j^k-\min\{W_1^k,W_2^k\}]}
$$
and set $\tilde{D}_j^i=D_j^{a_j^i}$,  $j\in\{1,2\}$.}
\end{itemize}
}
\item{Set: $D_j^i=\tilde{D}_j^{i}$, $(i,j)\in\{1,\dots,S\}\times\{1,2\}$.}
\item{Output: $(D_1^{1},\dots,D_1^S), (D_2^1,\dots,D_2^{S})$.}
\end{enumerate}
\caption{Maximal Coupling Resampling}
\label{alg:max_coup}
\end{algorithm}

\section{Theoretical Considerations}\label{sec:theory}

Most of the assumptions are given in Appendix \ref{app:ass}, where they are also discussed. The final two assumptions are
in Appendices \ref{app:conv} and \ref{app:bias} respectively. If $\varphi:\mathsf{X}\rightarrow\mathbb{R}$, we denote by
$\textrm{Lip}(\mathsf{X})$ the collection of Lipschitz functions, that is that there exists a $C<+\infty$ such
that for any $(x,\overline{x})\in\mathsf{X}^2$
$$
|\varphi(x)-\varphi(\overline{x})| \leq C|x-\overline{x}|
$$
and we recall $|\cdot|$ is the $\mathbb{L}_1$-norm. 

\begin{theorem}\label{theo:main_result}
Assume (A\ref{ass:1}-\ref{ass:4}).  Then for any $(\varphi,p)\in\mathcal{B}_b(\mathsf{X})\cap\textrm{\emph{Lip}}(\mathsf{X})\times\{1,\dots,T\}$ there exists a $C<+\infty$ such that for any 
$\epsilon>0$, $L=\mathcal{O}(|\log(\epsilon)|)$, $L\in\mathbb{N}$, there exists a $(S_2^{\epsilon},\dots,S_L^{\epsilon})\in\mathbb{N}^{L-1}$ and any $S_1^{\epsilon}\in\mathbb{N}$ so that:
$$
\mathbb{\overline{E}}\left[\left(\pi_p^{S_{1:L}^{\epsilon}}(\varphi)-\pi_p(\varphi)\right)^2\right] \leq C\left(
\sum_{l=1}^{L}\frac{\Delta_l}{S_l^{\epsilon}} + \epsilon^2
\right).
$$
\end{theorem}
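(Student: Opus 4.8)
The plan is to split the error into a Monte--Carlo part and a deterministic discretization bias and treat them separately:
$$
\pi_p^{S_{1:L}}(\varphi)-\pi_p(\varphi) = \big(\pi_p^{S_{1:L}}(\varphi)-\pi_p^{L}(\varphi)\big) + \big(\pi_p^{L}(\varphi)-\pi_p(\varphi)\big),
$$
so that $\overline{\mathbb{E}}\big[(\pi_p^{S_{1:L}}(\varphi)-\pi_p(\varphi))^2\big]\le 2\,\overline{\mathbb{E}}\big[(\pi_p^{S_{1:L}}(\varphi)-\pi_p^{L}(\varphi))^2\big]+2\big(\pi_p^{L}(\varphi)-\pi_p(\varphi)\big)^2$. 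The second term is deterministic: by Lemma~\ref{lem:7} together with the rate supplied by the assumption in Appendix~\ref{app:bias} one has $|\pi_p^{L}(\varphi)-\pi_p(\varphi)|\le C\Delta_L^{\alpha}$ for some $\alpha\in(0,1]$, so choosing $L=\mathcal{O}(|\log(\epsilon)|)$ large enough forces $\big(\pi_p^{L}(\varphi)-\pi_p(\varphi)\big)^2\le\epsilon^2$. Everything then reduces to bounding the statistical error $\overline{\mathbb{E}}\big[(\pi_p^{S_{1:L}}(\varphi)-\pi_p^{L}(\varphi))^2\big]$ by $C\sum_{l=1}^{L}\Delta_l/S_l$.

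Using the multilevel identity \eqref{eq:ml_id_new}, decompose
$$
\pi_p^{S_{1:L}}(\varphi)-\pi_p^{L}(\varphi)=\sum_{l=1}^{L}B_l,\quad B_1:=\pi_p^{S_1,1}(\varphi)-\pi_p^{1}(\varphi),\quad B_l:=\big(\pi_p^{S_l,l}(\varphi)-\overline{\pi}_p^{S_l,l-1}(\varphi)\big)-\big(\pi_p^{l}(\varphi)-\pi_p^{l-1}(\varphi)\big),\ \ l\ge 2.
$$
Since Algorithm~\ref{alg:pf} and the $L-1$ runs of Algorithm~\ref{alg:mlpf} are mutually independent, the $B_l$ are independent, so $\overline{\mathbb{E}}\big[\big(\sum_{l}B_l\big)^2\big]=\sum_{l}\mathrm{Var}(B_l)+\big(\sum_{l}\overline{\mathbb{E}}[B_l]\big)^2\le\sum_{l}\overline{\mathbb{E}}[B_l^2]+\big(\sum_{l}|\overline{\mathbb{E}}[B_l]|\big)^2$. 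The first sum is the main object; the second is of lower order once the per-level bias estimates are in hand, since $\big(\sum_{l}|\overline{\mathbb{E}}[B_l]|\big)^2\le\big(C\sum_{l}\Delta_l/S_l\big)^2\le C\sum_{l}\Delta_l/S_l$ in the regime $\sum_{l}\Delta_l/S_l\le 1$ (the theorem being vacuous otherwise, as the left-hand side is always at most $(\mathrm{osc}\,\varphi)^2$). For $B_1$ one invokes the classical non-asymptotic $\mathbb{L}_2$ and bias bounds for the bootstrap particle filter, $\overline{\mathbb{E}}[B_1^2]\le C/S_1$ and $|\overline{\mathbb{E}}[B_1]|\le C/S_1$, valid under the boundedness and positivity of $g$ contained in (A\ref{ass:1}--\ref{ass:4}).

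The crux is the uniform-in-$l$ estimate $\overline{\mathbb{E}}[B_l^2]\le C\Delta_l/S_l$ and $|\overline{\mathbb{E}}[B_l]|\le C\Delta_l/S_l$ for $l\ge 2$, proved by induction on the observation time $n=1,\dots,p$, adapting \cite{mlpf}. The three ingredients are: (i) a per-particle coupling estimate --- under the change-of-measure coupling of \cite{lemaire} the level-$l$ and level-$(l-1)$ particles jump at \emph{exactly} the same times and share the same discrete skeleton, so that, using the $\mathcal{O}(\Delta_l)$ deterministic error of the Euler flow \eqref{eq:euler}, the Lipschitz bounds of (D\ref{hyp:d1}), the boundedness and Lipschitz continuity of $\lambda$ and of the density of $Q$, and moment bounds on the number of Poisson jumps $N_{[n-1,n]}^{\star,l}$, one obtains $\overline{\mathbb{E}}\big[|X_n^{i,l}-\overline{X}_n^{i,l-1}|^q\big]\le C\Delta_l^{q}$ and $\overline{\mathbb{E}}\big[|1-R_{n-1,n}^{l}(\mathbf{X}_{n-1}^{i,l},\Xi_{n-1,n}^{i,l},\mathbf{X}_{n}^{i,l})|^q\big]\le C\Delta_l^{q}$, uniformly in $l$; (ii) a stability estimate for the normalised weights $G_n^{i,l}$, $\overline{G}_n^{i,l-1}$; and (iii) a bound on the maximal-coupling resampling step (Algorithm~\ref{alg:max_coup}): the expected fraction of particle pairs receiving distinct ancestors is at most the expected total-variation distance between the two weight vectors, which by (i)--(ii) is $\mathcal{O}(\Delta_l)$. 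Propagating (i)--(iii) through the recursion with a Marcinkiewicz--Zygmund/Burkholder inequality for the conditionally i.i.d.\ sampling step, and using $\varphi\in\mathrm{Lip}(\mathsf{X})\cap\mathcal{B}_b(\mathsf{X})$ (Lipschitz continuity to turn the $\mathcal{O}(\Delta_l)$ closeness of paired particles into an $\mathcal{O}(\Delta_l)$ contribution, boundedness to control the weight terms and the assumption of Appendix~\ref{app:conv}), yields the claimed bounds with $C$ depending on $p$ but not on $l$. Combining with the first paragraph gives $\overline{\mathbb{E}}\big[(\pi_p^{S_{1:L}}(\varphi)-\pi_p(\varphi))^2\big]\le C\big(\sum_{l=1}^{L}\Delta_l/S_l+\epsilon^2\big)$; this in fact holds for every $(S_1,\dots,S_L)\in\mathbb{N}^L$ once $L=\mathcal{O}(|\log(\epsilon)|)$, and the particular $(S_2^{\epsilon},\dots,S_L^{\epsilon})$ in the statement is then any choice one cares to make (e.g. one that minimises the cost subject to a target MSE).

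I expect step (i)--(iii), and within it the bound $\overline{\mathbb{E}}[B_l^2]\le C\Delta_l/S_l$, to be the main obstacle. The two delicate points are: controlling the Radon--Nikodym weights $R^{l}$, which are products of ratios of thinning probabilities over a random, unbounded number of jumps and are a priori unstable, so that both moment control on $N^{\star}$ and a quantitative ``$R^{l}\to 1$'' rate are needed to bound them (and the reweighted normalising constants) uniformly in $l$; and handling the maximal-coupling resampling, which only restores the coupling to accuracy $\mathcal{O}(\Delta_l)$ rather than $\mathcal{O}(\Delta_l^{2})$, so the recursion must be set up to track precisely the $\Delta_l$-order decoupling introduced at each of the $p$ resampling steps without accumulating a spurious extra factor of $p$ or of $L$ (the latter being essential for the log-free bound in the statement).
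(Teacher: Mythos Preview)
Your overall architecture---split into discretization bias plus Monte Carlo error, then use independence across levels to reduce to per-level second-moment and bias bounds on $B_l$---matches the paper's proof. The bias term is handled exactly as you say, via Lemma~\ref{lem:8}. The gap is in the core estimate you claim for $l\ge 2$,
$$
\overline{\mathbb{E}}[B_l^2]\le C\,\Delta_l/S_l\quad\text{and}\quad |\overline{\mathbb{E}}[B_l]|\le C\,\Delta_l/S_l\quad\text{for all }S_l\in\mathbb{N}.
$$
This is precisely what the paper explains it \emph{cannot} establish. The obstruction is the maximal-coupling resampling operator $\check{\Theta}_p^l$: after one resampling step the conditional law of a particle depends on the empirical measure through a \emph{product} (the two independent draws in the ``else'' branch of Algorithm~\ref{alg:max_coup}), so the usual induction-in-$n$ argument with a Marcinkiewicz--Zygmund step breaks down. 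In \cite{coup_clt} this is circumvented by a Stone--Weierstrass density argument to get convergence in probability, but that route requires the integrand to be continuous and bounded, and here $R_{p-1,p}^l$ is unbounded (cf.~\eqref{r:bound}). Your steps (i)--(iii) correspond to Lemmata~\ref{lem:1}--\ref{lem:5} and do yield the per-particle $\Delta_l$-rate, but they do not by themselves deliver a clean $S_l^{-1}$ rate for $B_l$.

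What the paper actually proves is weaker and explains the peculiar form of the theorem. Assumption (A\ref{ass:5}) postulates only convergence in probability of the relevant conditional expectations; combined with uniform integrability this gives, for every $\epsilon>0$, the existence of $S_l^{\epsilon}$ such that (Lemma~\ref{lem:6})
$$
\overline{\mathbb{E}}\big[|B_l|^2\big]^{1/2}\le C\Big(\Delta_l^{1/2}/\sqrt{S_l^{\epsilon}}+\epsilon\Big),
$$
and similarly $|\overline{\mathbb{E}}[B_l]|\le C\epsilon$ (Lemma~\ref{lem:7}). Hence the $\epsilon^2$ in the theorem's bound is \emph{not} solely the discretization bias: it also absorbs the residuals coming from the convergence-in-probability argument at each level and in the cross terms. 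In particular your final claim, that the bound ``in fact holds for every $(S_1,\dots,S_L)\in\mathbb{N}^L$'', is stronger than what the paper establishes; the existential quantifier on $(S_2^{\epsilon},\dots,S_L^{\epsilon})$ in the statement is essential with the present tools, as the paper itself acknowledges in the discussion following Theorem~\ref{theo:main_result} and preceding (A\ref{ass:5}).
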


\begin{proof}
Throughout the proof $C$ is a constant that does not depend upon $l$ and whose value can change from line-to-line.
Using first the $C_2$-inequality twice we have the upper-bound:
$$
\mathbb{\overline{E}}\left[\left(\pi_p^{S_{1:L}^{\epsilon}}(\varphi)-\pi_p(\varphi)\right)^2\right] \leq 
C\Bigg((\pi_p^L(\varphi)-\pi(\varphi))^2 +
$$
$$
\mathbb{\overline{E}}\left[\left(\pi_p^{S_1^{\epsilon},1}(\varphi)-\pi_p^1(\varphi)\right)^2\right] + 
\mathbb{\overline{E}}\left[\left(
\sum_{l=2}^L\left\{\pi_p^{S_l^{\epsilon},l}(\varphi) - \overline{\pi}_p^{S_l^{\epsilon},l-1}(\varphi)\right\} -
\sum_{l=2}^L\left\{\pi_p^l(\varphi)-\pi_p^{l-1}(\varphi)\right\}
\right)^2\right]
\Bigg).
$$
For the first two terms on the R.H.S.~of the inequality one can use Lemma \ref{lem:8} in the Appendix and standard results
for particle filters (e.g.~\cite[Lemma A.3.]{beskos}) to obtain the upper-bound:
$$
C\Bigg(\Delta_L^2 + \frac{\Delta_1}{S_1^{\epsilon}} + \mathbb{\overline{E}}\left[\left(\pi_p^{S_1^{\epsilon},1}(\varphi)-\pi_p^1(\varphi)\right)^2\right] + 
\mathbb{\overline{E}}\left[\left(
\sum_{l=2}^L\left\{\pi_p^{S_l^{\epsilon},l}(\varphi) - \overline{\pi}_p^{S_l^{\epsilon},l-1}(\varphi)\right\} -
\sum_{l=2}^L\left\{\pi_p^l(\varphi)-\pi_p^{l-1}(\varphi)\right\}
\right)^2\right]\Bigg).
$$
For the last term in the above expression, one can multiply out the brackets and apply Lemmata \ref{lem:6}-\ref{lem:7}. 
The proof is completed by noting the specification of $L$ in the statement amd
 specifying $(S_2^{\epsilon},\dots,S_{L}^{\epsilon})$ large enough so that the sum of the upper-bounds are $\mathcal{O}(\epsilon^2)$.
\end{proof}

The result that is given in Theorem \ref{theo:main_result} is not the usual bound that is given for MSE bounds in the multilevel Monte Carlo literature. The reason for this, is the difficulty of analyzing Algorithm \ref{alg:mlpf} and is explained extensively in Appendix \ref{app:conv}. The assumption (A\ref{ass:5}) in Appendix \ref{app:conv} is essentially a result one would want to prove; therefore one can say that our analysis needs to be extended. None-the-less the result is indicative as it suggests that
one should choose $S_l^{\epsilon}=\mathcal{O}(\epsilon^{-2}\Delta_lL)$ to achieve a MSE of $\mathcal{O}(\epsilon^2)$ (to choose $L$ one can use the bias result, Lemma \ref{lem:8}). If the cost of simulation of Algorithm \ref{alg:mlpf} is $\mathcal{O}(\Delta_l^{-1})$, which is not totally clear due to the random amount of time one must apply the solver, one recovers the classical complexity $\mathcal{O}(\epsilon^{-2}\log(\epsilon)^2)$ to achieve a MSE of
$\mathcal{O}(\epsilon^2)$. Even though our mathematical results need extension, we choose  $S_l^{\epsilon}=\mathcal{O}(\epsilon^{-2}\Delta_lL)$ and investigate the complexity numerically. 

We remark if one uses a single particle filter at level $L$
then one expects a cost of $\mathcal{O}(\epsilon^{-3})$ cost to achieve a $\mathcal{O}(\epsilon^2)$ MSE (assuming the cost of simulation of Algorithm \ref{alg:pf} is $\mathcal{O}(\Delta_l^{-1})$, when using discretization $\Delta_l$ instead of $\Delta_1$). 
This follows from standard results on particle filters (e.g.~\cite[Lemma A.3.]{beskos}), which imply that $S_L=\mathcal{O}(\epsilon^{-2})$  and the bias result, Lemma \ref{lem:8} in the Appendix, which says that $L=\mathcal{O}(|\log(\epsilon)|)$. 
We give a numerical comparison using a single particle filter versus using multilevel particle filters in the next section.

\section{Numerical Results}\label{sec:numerics}

In this section, we provide numerical illustrations of our methodology for the Multilevel Particle Filters (MLFP) algorithm, applied to PDMPs while comparing it to the standard Particle Filter. We'll outline and test our algorithms with different neuroscience models, where we present and demonstrate the benefit of using MLPF.

\subsection{Biological Neuron Models}

The Hodgkin--Huxley (HH) model (\cite{HH,Erment}) is one of the most well-known models in neuroscience, and is a family of conductance-based neurons. It is built to describe the behavior of the squid's neuron, in particular, for the propagation of an action potential (spike) along the neuron. Hodgkin and Huxley were able to perform experiments on space-clamped squid's giant axon, due to the fact that the diameter of this axon is greater than others, where their model is equivalent to an electrical analogy. Mathematically, the HH model is a system of first-order nonlinear ordinary differential equations with four coupled equations:

\begin{equation}
\left\{\begin{array}{lllclcll}
 \dfrac{dV}{dt}= \dfrac{1}{C_{m}} \left[ -\overline{g_{Na}} m^3 h (V-E_{Na})-\overline{g_K} n^4 (V-E_K)-\overline{g_L}(V-E_{L}) + I_{\text{ext}}\right],
 \vspace{1mm}\\
\dfrac{dm}{dt}=\alpha_{m}(V)(1-m)-\beta_{m}(V)m, \vspace{1mm}\\
\dfrac{dh}{dt}=\alpha_{h}(V)(1-h)-\beta_{h}(V)h,\vspace{1mm}\\
\dfrac{dn}{dt}=\alpha_{n}(V)(1-n)-\beta_{n}(V)n.

\end{array}\right.
\label{moneq1}
\end{equation}

Where the variable\ $V(t)$\ describes the membrane potential of the cell, meaning that it is the difference between intracellular and extracellular at time\ $t$. The variables\ $m(t)$,\ $h(t)$\ and\ $n(t)$\ represent the probability that these channels gates be open at time\ $t$\ depending on the membrane potential. Furthermore\ $m(t)$\ (resp.\ $h(t)$)\ is the activation (resp. inactivation) of the sodium flow current while\ $n(t)$\ is the activation of the potassium flow current. The constant\ $\overline{g_{Na}}$,\ (resp.\ $\overline{g_K}$)\ is the maximal conductances of a sodium (resp. potassium) channel, and\ $\overline{g_L}$\ is the leaky conductance. The parameters\ $E_{Na}$,\ $E_{K}$, and\ $E_{L}$\ are called reversal potentials and are obtained by Nernst's equation. The functions\ $\alpha_{x}(V)$\ and\ $\beta_{x}(V)$, where\ $x=m,h,n$\ describe the transfer rate between the opening states and the closing states.\ $C_{m}$\ is the membrane capacitance and the applied current\ $I_{\text{ext}}$\ models stimulating external drive. The formulas are given below,  \\

\begin{flushleft}

$\alpha_m(V):= \frac{0.1\, ( V+40)}{1-\exp\big(\frac{-(V+40)}{10}\big)},~~~~~~~~~\beta_m(V):=4\exp\Big(\frac{-(V+65)}{18}\Big),~~~~~~~~~\alpha_h(V):=0.07\exp\Big(\frac{-(V+65)}{20}\Big)$,\\

$\beta_h(V):=\frac{1}{1+\exp\big(\frac{-(V+35)}{10}\big)},~~~~~~~~~\alpha_n(V):=\frac{0.01\, (V+55)}{1-\exp\big(\frac{-(V+55)}{10}\big)},~~~~~~~~~\beta_n(V):=0.125 \exp\Big(\frac{-(V+65)}{80}\Big)$,\\

\end{flushleft}

$E_{K} = -77\; \text{mV}$,\quad $E_{Na} = 50\;\text{mV}$,\quad $E_{L} = -54.4\;\text{mV}$,\quad $\overline{g}_{K} =36\; mS/cm^{2}$,\\
 $\overline{g}_{Na} =120\; mS/cm^{2}$,\quad  $\overline{g}_{L} =0.3\; mS/cm^{2} $,\quad $C_m = 1 \mu F/cm^2$.

\subsubsection{The 2D Morris--Lecar Model}

One of the simplest models in computational neuroscience, for the production of spikes, is a reduced 2-variable model proposed by Kathleen Morris and Harold Lecar (ML) (\cite{Morris--Lecar}) . The ML equations are simpler than the HH equations, where they can easily explain the dynamics of the barnacle muscle fiber by exploiting bifurcation theory, and that they exhibit many features of neuronal activity (e.g., firing events, emergent dynamics...). 

The ML model has three ion channels: a potassium channel, a leak, and a calcium channel. In the simplest version of the model, the calcium current depends instantaneously on the voltage.

Mathematically, the ML model is a system of first-order nonlinear ordinary differential equations with two coupled equations as follows:
\begin{equation}
\left\{\begin{array}{ll}
 \dfrac{dV}{dt}&= \quad
 \dfrac{1}{C_{m}} \left[ -\overline{g_{Ca}} M_{\infty} (V-E_{Ca})-\overline{g_K} n (V-E_K)-\overline{g_L}(V-E_{L}) + I_{\text{ext}}\right]\vspace{2mm}\\
\dfrac{dn}{dt}&=\quad
\alpha_{n}(V)(1-n)-\beta_{n}(V)n,

\end{array}\right.
\label{ML}
\end{equation}

where,
\begin{align*}
M_{\infty}&= \frac{1}{2} (1 + \tanh ( (V - V_1)/V_2 ) ),\\
N_{\infty}&= \frac{1}{2} (1 + \tanh ( (V - V_3)/V_4 ) )
\end{align*}
and
\begin{align*}
\alpha_{n}(V) &= \lambda_{n}(V) N_{\infty}(V), \\
\beta_{n}(V) &= \lambda_{n}(V) (1 - N_{\infty})(V),\\
\lambda_{n}(V) &= \overline{\lambda}_{n} \cosh ( (V - V_3)/2V_4 ).
\end{align*}

Here, $V_{1,2,3,4}$ are parameters chosen to fit voltage clamp data. Hence, we consider the PDMP $(X_t)_{t \geq 0}$ associated with the equations (\ref{ML}) as follows

\begin{align*}
f(u,V) &= \frac{1}{C_{m}} \left[ -\overline{g_{Ca}} M_{\infty} (V-E_{Ca})-\overline{g_K} \frac{u}{N_n} (V-E_K)-\overline{g_L}(V-E_{L}) + I_{\text{ext}}\right],\\
\lambda (u, V) &= (N_n - u)\alpha_{n}(V) + u \beta_{n}(V),\\
   Q((u,V), \left\{( u+1,V )\right\}) &= \frac{(N_{\infty} - u) \alpha_{n}(V)}{\lambda (u, V)},\\
   Q((u,V), \left\{ (u - 1,V) \right\}) &= \frac{ u \beta_{n}(V)}{\lambda (u, V)}.
\end{align*}
The observation data $Y_k$ that we choose is  
$Y_k\mid(V_{k\delta}, n_{k\delta} )\sim \mathcal{N}( V_{k\delta} ,\tau^2)$ where $\delta=0.5$ and $\tau^2=0.1$

In the following table, we set the value of the parameters we used in the simulations.

\begin{table}[H]
\begin{center}
\begin{tabular}{ ||c| c| c|  c|| } 
\hline \hline

$V(t_0) = -20\, \text{mV}$ & $n(t_0) = 0 $   & $E_{K} = -84\; \text{mV} $  & $E_{L} = -60\;\text{mV}$ \\
\hline \hline

 $E_{Ca} = 120\;\text{mV}$ & $\overline{g}_{K} =8\; mS/cm^{2}$ & $\overline{g}_{L} = 2\; mS/cm^{2} $ & $\overline{g}_{Ca} = 4.4\; mS/cm^{2}$
 \\
 \hline \hline
  $C_m = 20 \mu F/cm^2$ & $\overline{\lambda}_{n} = 0.04 $ & $I_{\text{ext}}  = 100  \mu A/cm^2 $ & $N_n = 100$ 
\\
\hline

$V_1 = -1.2$ & $V_2 = 18$ & $V_3 = 2$ & $V_4 = 30$ 
\\
\hline
 \hline
\end{tabular}
\caption{Parameter choices of the ML model.}
\label{tab:tab1}
\end{center}
\end{table}

\subsubsection{ $I_K + I_L$ Model}

We also consider an alternative model, which is a conductance-based neuron model. This model was suggested as a two-dimensional (2D) simplification of the realistic HH model. Indeed, this 2D neuron model consists of two coupled equations with two main variables $V$ and $m_K$. It is governed by the following pair of differential equations

\begin{equation}
\left\{\begin{array}{ll}

 \dfrac{dV}{dt}&=\quad  -\overline{g_K} m_K (V-E_K)-\overline{g_L}(V-E_{L}) + I_{\text{ext}} \vspace{2mm}\\
\dfrac{dm_K}{dt}&=\quad
\alpha_{m_K}(V)(1-m_K)-\beta_{m_K}(V)m_K,

\end{array}\right.
\label{CBN}
\end{equation}
where $\alpha_{m_K}(V) = \frac{0.1\, ( V+40)}{1-\exp\big(\frac{-(V+40)}{10}\big)}$ and $\beta_{m_K}(V)=4\exp\big(\frac{-(V+65)}{18}\big)$. We consider the PDMP $(X_t)_{t \geq 0}$ associated with the equations (\ref{CBN}) as follows

\begin{align*}    
f(u,V) &=  -\overline{g_K} \frac{u}{N_{m_K}} (V-E_K)-\overline{g_L}(V-E_{L}) + I_{\text{ext}},\\
\lambda (u, V) &= (N_{m_K} - u)\alpha_{m_K}(V) + u \beta_{m_K}(V),\\
Q((u,V), \left\{ (u+1,V) \right\}) &= \frac{(N_{m_K} - u) \alpha_{m_K}(V)}{\lambda (u, V)},\\
Q((u,V), \left\{ (u - 1,V) \right\}) &= \frac{ u \beta_{m_K}(V)}{\lambda (u, V)}.
\end{align*}
The observation data $Y_k$ that we choose is  
$Y_k\mid(V_{k\delta}, n_{k\delta} )\sim \mathcal{N}( V_{k\delta} ,\tau^2)$ where $\delta=0.5$ and $\tau^2=0.2$

In the following table, we set the value of the parameters we used in the simulations.

\begin{table}[H]
\begin{center}
\begin{tabular}{ ||c| c| c|  c|| } 
\hline \hline

$V(t_0) = -65\, \text{mV}$ & $ m_{K}(t_0) = 0 $   & $E_{K} = -77\; \text{mV} $  & $E_{L} = -54.4\;\text{mV}$ \\
\hline \hline

 $I_{\text{ext}}  = 10 \sin (\frac{\pi t}{10})$  & $\overline{g}_{K} =36\; mS/cm^{2}$ & $\overline{g}_{L} = 0.3\; mS/cm^{2} $ & $ N_{m_K} = 100$
 \\

\hline

\hline
 \hline
\end{tabular}
\caption{The value of the parameters of the $I_K + I_L$ Model.}
\label{tab}
\end{center}
\end{table}

\subsection{Simulation Settings and Results}
For our numerical experiments, with neuroscience models above, we consider multilevel estimators at levels $l = \{3,4,5,6,7\}$. The results are generated from a high-resolution simulation of particle filters to approximate the ground truth of our models. \\
The resampling is done adaptively. For the particle filters, resampling is done when the effective sample size (ESS) is less than $1/2$ of the particle numbers. For the coupled filters, we use the ESS of the coarse filter as the measurement of discrepancy. Each simulation is repeated $100$ times.


We now present our numerical simulations to show the benefits of applying the MLPF algorithm to the PDMPs models, in comparison to the PF. Our results compare the MSE directly with the cost, which considers the rate through each targeted MSE $\mathbb{E}[\phi(V_{n \delta} | y_{1:n})]$, which are shown in Figure \ref{fig:MSEvsCost}. The figure shows that as we increase the levels from $l=3$ to $l=7$, the difference in the cost between the methods also increases. These figures show the advantage and accuracy of using MLPF to PDMP systems. Table 2 presents the estimated rates of MSE with respect to cost. This agrees with our theory, which predicts a  complexity rate of $\mathcal{O}(\epsilon^{-3})$ for the particle filter and $\mathcal{O}(\epsilon^{-2} \log (\epsilon)^2)$ for the multilevel particle filter of PDMPs.

\begin{figure}
\centering
\subfigure{\includegraphics[width=14cm,height=5cm]{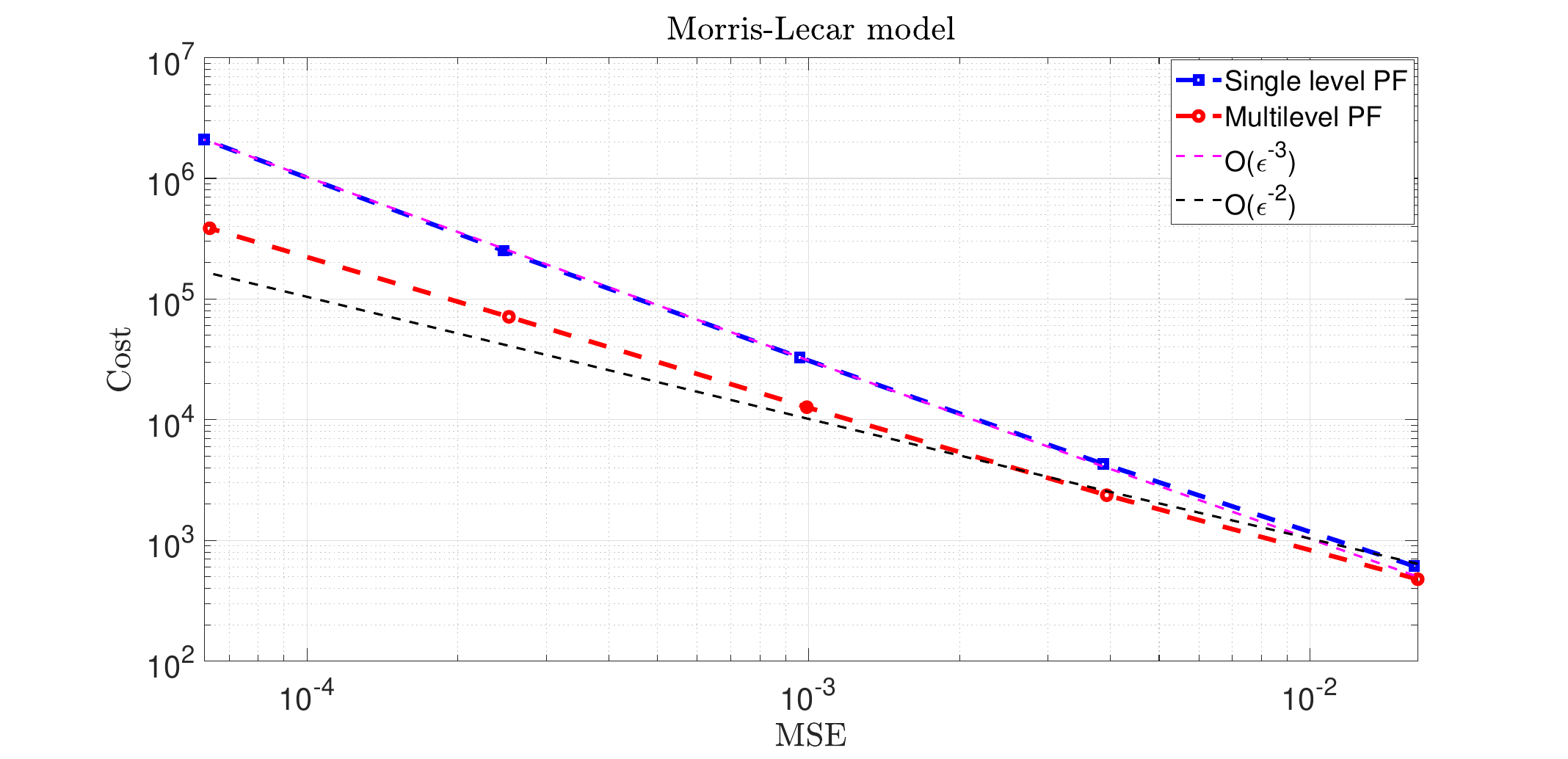}}
\subfigure{\includegraphics[width=14cm,height=5cm]{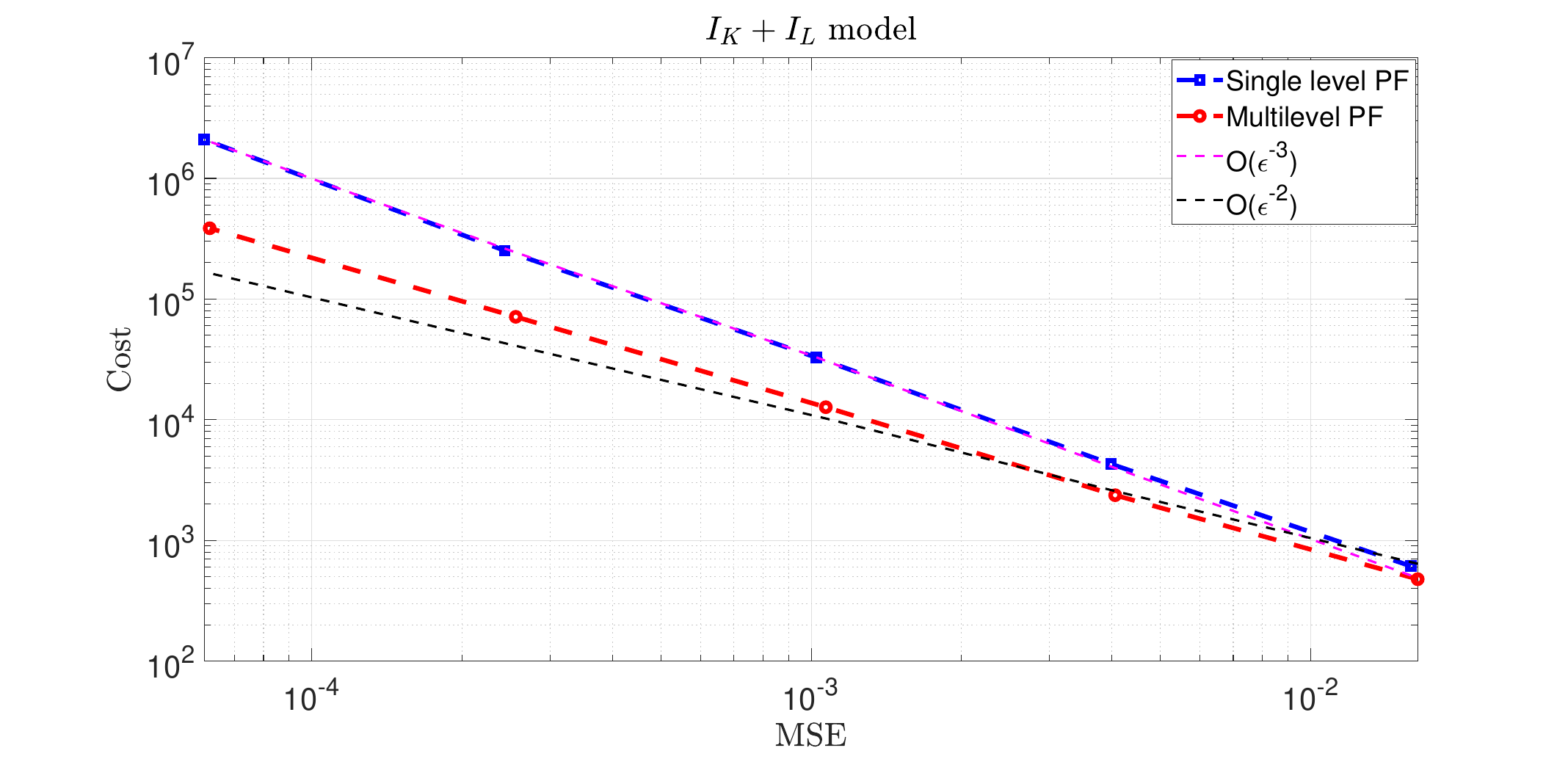}}
 \caption{Cost rates as a function of the mean squared error.}
    \label{fig:MSEvsCost}
\end{figure}

\begin{table}[H]
\begin{center}
\begin{tabular}{ c c c  } 
\hline \hline
Model  & Particle Filter & Multilevel Particle Filter   \\
\hline
\hline
\textbf{$I_K + I_L$ Model} & -1.46 &-1.2   \\ 
 
\textbf{Morris--Lecar system} & -1.47 &-1.19  \\

 \hline
\end{tabular}
\caption{Estimated rates of MSE with respect to cost.}
\label{tab:tab2}
\end{center}
\end{table}

\subsubsection*{Acknowledgements}

AJ \& MM were supported by KAUST baseline funding.

\appendix

\section{Proofs}

This appendix consists of four sections and should be read in order; the logical progression of the proofs is such that it does not make sense to read it in another way. All of the results have been written in order to prove Theorem \ref{theo:main_result}.
In Section \ref{app:ass} we give most of the assumptions that are employed in our analysis, along with a discussion of them.
We note that two of the assumptions can be found in later sections, where they are explicitly used. In Section \ref{app:rate} we give several results which can be used to obtain the `rates' i.e.~the term $\Delta_l$ in the upper-bound in Theorem \ref{theo:main_result}. In Section \ref{app:conv} we explain why our main result needs to be extended in terms of mathematical complexity and give the main technical results associated to an assumption we make. Finally in Section \ref{app:bias} we prove that our discretized filter converges to the exact filter and specify exactly the rate as a function of $\Delta_l$.

\subsection{Assumptions}\label{app:ass}

Throughout the appendix $C$ is a constant that does not depend upon $l$ and whose value can change from line-to-line.

\begin{hypA}\label{ass:1}
 For $0\le s\le t\le T$,  there exists a $C<+\infty$ such that for any $(l,x,\overline{x})\in\mathbb{N}\times\mathsf{X}^2$:
$$
|\Phi(x,t)-\Phi^l_{[s,t]}(\overline{x},t)| \leq C\left(|x-\overline{x}| + \Delta_l\right).
$$
\end{hypA}

\begin{hypA}\label{ass:2}
We have that
\begin{enumerate}
\item{There exist $0<\underline{\lambda}<\overline{\lambda}<\lambda^{\star}$ such that for every $x\in\mathsf{X}$:
$$
\underline{\lambda}\leq \lambda(x)\leq \overline{\lambda}.
$$
}
\item{There exist $0<\underline{C}<\overline{C}<+\infty$ such that for every $(x,u)\in\mathsf{X}\times\mathsf{U}$:
$$
\underline{C}\leq Q(x,u)\leq \overline{C}.
$$
}
\item{For any $y\in\mathsf{Y}$ there exist $0<\underline{C}<\overline{C}<+\infty$ such that for every $x\in\mathsf{X}$:
$$
\underline{C}\leq g(x,y)\leq \overline{C}.
$$
}
\item{For any $y\in\mathsf{Y}$, $g(\cdot,y)\in\textrm{Lip}(\mathsf{X})$.}
\end{enumerate}
\end{hypA}

\begin{hypA}\label{ass:3}
For any $(\varphi,p,\kappa)\in\mathcal{B}_b(\mathsf{X})\cap\textrm{Lip}(\mathsf{X})\times\{1,\dots,T\}\times(0,\infty)$ there exists a $C<+\infty$ such that for any $(l,x,\overline{x})\in\mathbb{N}\times\mathsf{X}^2$:
$$
\mathbb{E}^l\left[\left|\varphi(X_p^l)-\varphi(X_p^{l-1})R_{p-1,p}^l(\mathbf{X}_{p-1}^l,\Xi_{p-1,p}^l,\mathbf{X}_{p}^l)\right|^{\kappa}|\mathbf{X}_{p-1}^l=(x,\overline{x})\right] \leq C\left(|x-\overline{x}|^{\kappa}+\Delta_l^{\kappa}\right).
$$
\end{hypA}

Assumption (A\ref{ass:1}) is a reasonable assumption that is satisfied for many discretization schemes, including the Euler approximation method used in this article. Assumption (A\ref{ass:2}) in terms of $\lambda$, is quite reasonable as we start with the premise that $\lambda$ is bounded; that it is further lower and upper-bounded does not seem overly restrictive in general.
The assumptions on $Q$ and $g$ are ones that have been adopted in many analyses of multilevel particle filters e.g.~\cite{coup_clt,mlpf,anti_pf} and essentially imply that the associated spaces are compact. Generally, weakening these assumptions needs the application of drift conditions \cite{delm,coup_clt} and leads to longer proofs; hence we do not do this.
For (A\ref{ass:3}), on inspection of \cite[Theorem 3.2.]{lemaire} it appears the rate in terms of $\Delta_l$ is appropriate. The continuity in terms of $(x,\overline{x})$ also seems reasonable given (A\ref{ass:1}).

\subsection{Rate Proofs}\label{app:rate}

\begin{lem}\label{lem:1}
Assume (A\ref{ass:1}).  Then for any $(p,\kappa)\in\{1,\dots,T\}\times(0,\infty)$ there exists a $C<+\infty$ such that for any $(l,x,\overline{x})\in\{2,3,\dots\}\times\mathsf{X}^2$:
$$
\mathbb{E}^l\left[\left|X_p^l-X_p^{l-1}\right|^{\kappa}|\mathbf{X}_{p-1}^l=(x,\overline{x})\right] \leq C\left(|x-\overline{x}|^{\kappa}+\Delta_l^{\kappa}\right).
$$
\end{lem}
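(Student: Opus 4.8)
The goal is to bound the conditional moment of $|X_p^l - X_p^{l-1}|$, where both processes are simulated on $[p-1,p]$ from the coupled construction described in Section 2.1.3: they share the same Poisson-thinning skeleton (the candidate times $T_q^{\star,l}$), the same accepted jump indices $\tau_j^l$, and hence the same jump times $T_j^l$; the only difference is that one process uses the fine Euler flow $\Phi^l$ and the other the coarse flow $\Phi^{l-1}$, and the images of the jump kernel $Q$ differ. The plan is to condition further on the skeleton $\Xi_{p-1,p}^l$ (equivalently on $N_{[p-1,p]}^{\star,l}$ and the jump times), propagate the discrepancy jump-by-jump using (A\ref{ass:1}), and then remove the conditioning using the fact that under $\mathbb{E}^l$ the number of Poisson candidates on a unit interval has a Poisson$(\lambda^\star)$ distribution with all exponential moments finite.

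\textbf{Key steps.} First I would set up the recursion over the successive jump times. Between consecutive jump times $T_j^l$ and $T_{j+1}^l$ each component flows deterministically, so by (A\ref{ass:1}) applied on $[T_j^l, T_{j+1}^l]$,
$$
\bigl|X_{T_{j+1}^l-}^l - X_{T_{j+1}^l-}^{l-1}\bigr| \leq C\bigl(|X_{T_j^l}^l - X_{T_j^l}^{l-1}| + \Delta_l\bigr).
$$
At the jump time $T_{j+1}^l$ both processes redraw the $u$-coordinate from $Q$; under (D\ref{hyp:d1}) the jump leaves the $v$-coordinate unchanged, and in the coupled construction the discrete $u$-values are taken equal (recall $u_{T_p^l}^{l-1}=u_{T_p^l}^{l}$, stated just after \eqref{eq:rt_def4}), so the jump does not increase the discrepancy; hence the same bound holds with $T_{j+1}^l-$ replaced by $T_{j+1}^l$. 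Iterating over the $N := N_{[p-1,p]}^l$ jumps and the final flow segment to time $p$ gives, with $x=X_{p-1}^l$, $\overline x=X_{p-1}^{l-1}$,
$$
\bigl|X_p^l - X_p^{l-1}\bigr| \leq C^{N+1}\bigl(|x-\overline x| + \Delta_l\bigr)\sum_{j=0}^{N}C^{-j} \leq C_1^{N+1}\bigl(|x-\overline x| + \Delta_l\bigr)
$$
for a suitable constant $C_1$ (absorbing the geometric factor; if $C\ge 1$ one simply bounds the sum by $(N+1)C^{N}$, which is still dominated by $C_2^{N+1}$). Raising to the power $\kappa$ and using the $C_\kappa$/$C_r$ inequality, $|X_p^l-X_p^{l-1}|^\kappa \le C\, C_1^{\kappa(N+1)}(|x-\overline x|^\kappa + \Delta_l^\kappa)$. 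Since $N \le N_{[p-1,p]}^{\star,l}$ and the latter, conditional on $\mathbf X_{p-1}^l$, is Poisson$(\lambda^\star)$ (the acceptance step in Algorithm \ref{alg:pdmp} only thins, so $N$ has exponential moments regardless), we have $\mathbb{E}^l[C_1^{\kappa(N+1)} \mid \mathbf X_{p-1}^l] \le \mathbb{E}^l[C_1^{\kappa(N^\star+1)}] = C_1^\kappa \exp\{\lambda^\star(C_1^\kappa - 1)\} =: C_3 < \infty$. Taking conditional expectation given $\mathbf X_{p-1}^l=(x,\overline x)$ then yields the claim with $C = C\,C_3$.

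\textbf{Main obstacle.} The delicate point is justifying that $N_{[p-1,p]}^l$ has a tractable conditional law with finite exponential moments in the \emph{coupled} construction, and that the inter-jump Lipschitz propagation is legitimate under the coarse flow $\Phi^{l-1}$ started from the coarse state — i.e. that (A\ref{ass:1}) applies with the correct starting point on each segment. This is where one must be careful with the abuse of notation flagged after \eqref{eq:id}: the level-$(l-1)$ process appearing here is the one driven by the shared skeleton, not an independent run of Algorithm \ref{alg:pdmp} with $\Phi^{l-1}$. One checks that on each segment $[T_j^l,T_{j+1}^l]$ this process is exactly $\Phi^{l-1}_{[T_j^l,\cdot]}(X_{T_j^l}^{l-1},\cdot)$, so (A\ref{ass:1}) applies verbatim, and that the number of jumps is governed entirely by the fine-level acceptances, whose count is stochastically dominated by $N^{\star,l} \sim$ Poisson$(\lambda^\star)$; this domination is what controls the random product of Lipschitz constants. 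Everything else — the geometric summation and the $C_\kappa$-inequality bookkeeping — is routine.
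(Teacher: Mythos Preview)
Your proposal is correct and follows essentially the same route as the paper: condition on the skeleton, propagate the discrepancy through the jump segments using (A\ref{ass:1}), obtain an almost-sure bound of the form $C^{N_{[p-1,p]}^l}(|x-\overline x|^{\kappa}+\Delta_l^{\kappa})$, and then take expectations using $N_{[p-1,p]}^l\le N_{[p-1,p]}^{\star,l}\sim\mathrm{Poisson}(\lambda^\star)$. The paper's proof is terser (it treats the case $N_{[p-1,p]}^l=0$ and then invokes a ``simple recursive argument''), while you spell out the inter-jump recursion, the role of the shared $u$-coordinate at jump times, and the exponential-moment bound explicitly; but the underlying argument is the same.
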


\begin{proof}
First we consider $N_{[p-1,p]}^l=0$ then clearly
$$
\mathbb{I}_{\{N_{[p-1,p]}^l=0\}}\left|X_p^l-X_p^{l-1}\right| = \mathbb{I}_{\{N_{[p-1,p]}^l=0\}}\left|\Phi_{[p-1,p]}^l(x,1)-\Phi_{[p-1,p]}^{l-1}(\overline{x},1)\right|.
$$
Using (A\ref{ass:1}), clearly, we have
$$
\mathbb{I}_{\{N_{[p-1,p]}^l=0\}}\left|X_p^l-X_p^{l-1}\right|^{\kappa} \leq C\left(|x-\overline{x}|^{\kappa}+\Delta_l^{\kappa}\right).
$$
Using a simple recursive argument, for any $N_{[p-1,p]}^l$ we would have the almost sure upper-bound
$$
\left|X_p^l-X_p^{l-1}\right|^{\kappa} \leq C^{N_{[p-1,p]}^{l}}\left(|x-\overline{x}|^{\kappa}+\Delta_l^{\kappa}\right).
$$
Thus, noting that  almost surely $N_{[p-1,p]}^{l}\leq N_{[p-1,p]}^{\star,l}$ and that the latter random variable has exponential moments, the proof is concluded.
\end{proof}

 At any time point, $k$ of Algorithm  \ref{alg:mlpf} we will denote the resampled index of particle $i\in\{1,\dots,S_l\}$ as $I_k^{i,l}$ (level $l$) and $I_k^{i,l-1}$ (level $l-1$). Now let $I_k^l(i)=I_k^{i,l}$, $I_k^{l-1}(i)=I_k^{i,l-1}$ and define $\mathsf{S}_k^l$ the collection of indices that choose the same ancestor at each resampling step, i.e.
\begin{eqnarray*}
\mathsf{S}_k^l & = & \{i\in\{1,\dots,S_l\}:I_k^l(i)=I_k^{l-1}(i),I_{k-1}^l\circ I_k^l(i)=I_{k-1}^{l-1}\circ I_k^{l-1}(i),\dots,\\ & & I_{1}^l\circ I_2^l\circ\cdots\circ I_k^l(i) = I_{1}^{l-1}\circ I_2^{l-1}\circ\cdots\circ I_k^{l-1}(i)\}.
\end{eqnarray*}
We use the convention that $\mathsf{S}_0^l=\{1,\dots,S_l\}$. For $(a,b)\in\mathbb{R}$, $a\wedge b$ is the minimum of $a$ and $b$. If $A$ is a finite set, we denote by $\textrm{Card}(A)$ its cardinality.

\begin{lem}\label{lem:2}
Assume (A\ref{ass:1}-\ref{ass:2}).  Then for any $(p,\kappa)\in\{1,\dots,T\}$ there exists a $C<+\infty$ such that for any $(l,S_l)\in\{2,3,\dots\}\times\mathbb{N}$:
$$
\mathbb{\overline{E}}\left[\frac{1}{S_l}\sum_{i\in\mathsf{S}_{p-1}}|X_p^{i,l}-\overline{X}_{p}^{i,l-1}|^{\kappa}\right]
\leq C\Delta_l^{\kappa}.
$$
\end{lem}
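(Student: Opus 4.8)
The plan is to prove this by induction on $p$, tracking how the "coupled" particles (those in $\mathsf{S}_{p-1}$, i.e.\ the ones that have selected a common ancestor at every resampling step so far) evolve. The key structural fact is that for $i\in\mathsf{S}_{p-1}^l$, the pair $(X_{p-1}^{i,l},\overline{X}_{p-1}^{i,l-1})$ is obtained by resampling from the previous population, and then the pair is propagated through one step of the time-discretized Algorithm \ref{alg:pdmp} driven by the \emph{same} Poisson clock at the two levels, so that $\mathbf{X}_p^{i,l}$ is generated from the coupled kernel underlying \eqref{eq:id}. Conditionally on the state at time $p-1$, Lemma \ref{lem:1} then controls the one-step growth: $\mathbb{E}^l[|X_p^{i,l}-\overline{X}_p^{i,l-1}|^\kappa \mid \mathbf{X}_{p-1}^{i,l}=(x,\overline{x})]\le C(|x-\overline{x}|^\kappa+\Delta_l^\kappa)$.

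First I would set up the induction. For the base case $p=1$, every particle is initialized from the coupled version of Algorithm \ref{alg:pdmp} on $[0,1]$ with the common starting point $\mathbf{x}_0^{i,l}=(x_0,x_0)$, so $|x_0-x_0|=0$ and Lemma \ref{lem:1} (with $\overline{x}=x$) gives the bound $C\Delta_l^\kappa$ directly; summing over $i\in\mathsf{S}_0^l=\{1,\dots,S_l\}$ and dividing by $S_l$ preserves it. For the inductive step, I would condition on $\calF_{p-1}$, the $\sigma$-algebra generated by the algorithm up to and including the resampling at time $p-1$. The indices in $\mathsf{S}_{p-1}^l$ are $\calF_{p-1}$-measurable. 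By the conditional version of Lemma \ref{lem:1}, summing over $i\in\mathsf{S}_{p-1}^l$,
\begin{equation*}
\mathbb{\overline{E}}\!\left[\frac{1}{S_l}\sum_{i\in\mathsf{S}_{p-1}^l}|X_p^{i,l}-\overline{X}_p^{i,l-1}|^\kappa \,\Big|\, \calF_{p-1}\right] \le \frac{C}{S_l}\sum_{i\in\mathsf{S}_{p-1}^l}\left(|\widehat{X}_{p-1}^{i,l}-\widehat{\overline{X}}_{p-1}^{i,l-1}|^\kappa + \Delta_l^\kappa\right),
\end{equation*}
where the hats denote the post-resampling values at time $p-1$. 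The $\Delta_l^\kappa$ term contributes at most $C\Delta_l^\kappa$. For the first term I would use the maximal-coupling resampling (Algorithm \ref{alg:max_coup}): a particle $i$ lies in $\mathsf{S}_{p-1}^l$ precisely when it inherited a common ancestor at step $p-1$ \emph{and} that ancestor already lay in $\mathsf{S}_{p-2}^l$; for such indices $\widehat{X}_{p-1}^{i,l}=X_{p-1}^{a,l}$ and $\widehat{\overline{X}}_{p-1}^{i,l-1}=\overline{X}_{p-1}^{a,l-1}$ for a common ancestor $a\in\mathsf{S}_{p-2}^l$. A counting/exchangeability argument then shows
$$
\mathbb{\overline{E}}\!\left[\frac{1}{S_l}\sum_{i\in\mathsf{S}_{p-1}^l}|\widehat{X}_{p-1}^{i,l}-\widehat{\overline{X}}_{p-1}^{i,l-1}|^\kappa\,\Big|\,\text{pre-resampling }\calF\right]\le \frac{C}{S_l}\sum_{i\in\mathsf{S}_{p-2}^l}|X_{p-1}^{i,l}-\overline{X}_{p-1}^{i,l-1}|^\kappa,
$$
i.e.\ resampling does not inflate the per-capita coupled discrepancy (the weights $G$ are bounded above and below by (A\ref{ass:2}), and each coupled ancestor is selected by at most a bounded expected number of offspring). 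Taking expectations and invoking the induction hypothesis closes the loop: the right-hand side is $\le C\Delta_l^\kappa + C\Delta_l^\kappa = C\Delta_l^\kappa$.

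The main obstacle I expect is making the resampling step rigorous, namely controlling $\mathbb{\overline{E}}[\frac{1}{S_l}\sum_{i\in\mathsf{S}_{p-1}^l}|\widehat{X}_{p-1}^{i,l}-\widehat{\overline{X}}_{p-1}^{i,l-1}|^\kappa]$ in terms of the pre-resampling quantity over $\mathsf{S}_{p-2}^l$. One has to argue that under maximal coupling, conditional on the pre-resampling state, an index $i$ joins $\mathsf{S}_{p-1}^l$ only by copying a pair $(X_{p-1}^{a,l},\overline{X}_{p-1}^{a,l-1})$ with $a\in\mathsf{S}_{p-2}^l$ (on the "common" branch of Algorithm \ref{alg:max_coup}), and that the expected number of such copies of each fixed $a$ is bounded uniformly in $S_l$ thanks to the two-sided bounds on $g$ from (A\ref{ass:2}) — this is exactly the standard particle-filter bound on offspring numbers, e.g.\ as in \cite[Lemma A.3.]{beskos} or \cite{mlpf}. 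Once that step is granted, everything else is a clean interplay of the tower property, Lemma \ref{lem:1}, and the induction hypothesis; the exponential-moment bound on $N_{[p-1,p]}^{\star,l}$ used inside Lemma \ref{lem:1} is what keeps all constants independent of $l$.
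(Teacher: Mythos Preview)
Your inductive skeleton matches the paper's proof exactly: base case via Lemma \ref{lem:1}, condition on the filtration, propagate with Lemma \ref{lem:1}, then push the remaining sum back through the resampling step and invoke the induction hypothesis together with $\textrm{Card}(\mathsf{S}_{p-1}^l)\le\textrm{Card}(\mathsf{S}_{p-2}^l)$.

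However, there is a real gap at the resampling step. You write that ``the weights $G$ are bounded above and below by (A\ref{ass:2})'' and appeal to a standard offspring-number bound as in \cite{beskos,mlpf}. That is fine for $G_{p-1}^{i,l}$, but the coarse-level weight
\[
\overline{G}_{p-1}^{i,l-1}\;\propto\; g(\overline{X}_{p-1}^{i,l-1},y_{p-1})\,R_{p-2,p-1}^l(\mathbf{X}_{p-2}^{i,l},\Xi_{p-2,p-1}^{i,l},\mathbf{X}_{p-1}^{i,l})
\]
carries the Radon--Nikodym derivative $R$, which is \emph{not} uniformly bounded: under (A\ref{ass:2}) one only gets the random two-sided bound $\underline{c}^{N_{[p-2,p-1]}^{i,\star,l}}\le R\le \overline{c}^{N_{[p-2,p-1]}^{i,\star,l}}$, with $N^{\star,l}$ the number of Poisson-clock events. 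The ``standard'' bounds you cite require a uniformly bounded potential and therefore do not apply as stated. This is precisely the technical content of the paper's proof: after writing
\[
\mathbb{\overline{E}}\!\left[\tfrac{1}{S_l}\sum_{i\in\mathsf{S}_{p-1}^l}|X_{p-1}^{I_{p-1}^{i,l},l}-\overline{X}_{p-1}^{I_{p-1}^{i,l-1},l-1}|^{\kappa}\right]
=\mathbb{\overline{E}}\!\left[\tfrac{\textrm{Card}(\mathsf{S}_{p-1}^l)}{S_l}\,
\frac{\sum_{i\in\mathsf{S}_{p-2}^l}|X_{p-1}^{i,l}-\overline{X}_{p-1}^{i,l-1}|^{\kappa}\{G_{p-1}^{i,l}\wedge\overline{G}_{p-1}^{i,l-1}\}}
{\sum_{i\in\mathsf{S}_{p-2}^l}\{G_{p-1}^{i,l}\wedge\overline{G}_{p-1}^{i,l-1}\}}\right],
\]
one replaces the $\overline{G}^{l-1}$'s by the random bounds $\overline{c}^{N^{\star}}$ and $\underline{c}^{N^{\star}}$, uses $\underline{c}\,\overline{c}\ge 1$ to rewrite the denominator, applies a conditional Jensen inequality to invert the average of $\overline{c}^{-N^{\star}}$, and only then integrates out the $N_{[p-2,p-1]}^{j,\star,l}$ using their independence and finite exponential moments. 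Without this argument the constant in your resampling inequality would depend on $N^{\star,l}$ (hence be random and unbounded), and the induction would not close. You should make this step explicit rather than deferring to the bounded-potential case.
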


\begin{proof}
The proof is very similar to that of \cite[Lemma D.3.]{mlpf} and as such, we skip some steps for brevity; the full details can be followed from the afore-mentioned proof. The case $p=1$ follows via Lemma \ref{lem:1}, so we shall assume that the result holds at a rank $p-1$, $p\geq 1$ and consider time $p$. Conditioning on the resampled particles and applying \cite[Lemma D.3.]{mlpf} gives the upper-bound
$$
\mathbb{\overline{E}}\left[\frac{1}{S_l}\sum_{i\in\mathsf{S}_{p-1}^l}\left|X_p^{i,l}-\overline{X}_{p}^{i,l-1}\right|^{\kappa}\right] \leq
C\left(\Delta_l^2 + \mathbb{\overline{E}}\left[\frac{1}{S_l}\sum_{i\in\mathsf{S}_{p-1}^l}\left|X_{p-1}^{I_{p-1}^{i,l},l}-\overline{X}_{p-1}^{I_{p-1}^{i,l-1},l-1}\right|^{\kappa}\right]\right).
$$
Then using the same logic as \cite[Lemma D.3.]{mlpf}, we have that
$$
 \mathbb{\overline{E}}\left[\frac{1}{S_l}\sum_{i\in\mathsf{S}_{p-1}^l}\left|X_{p-1}^{I_{p-1}^{i,l},l}-\overline{X}_{p-1}^{I_{p-1}^{i,l-1},l-1}\right|^{\kappa}\right] = 
  \mathbb{\overline{E}}\left[\frac{\textrm{Card}(\mathsf{S}_{p-1}^l)}{S_l}
  \frac{
\sum_{i\in\mathsf{S}_{p-2}^l}\left|X_{p-1}^{i,l}-\overline{X}_{p-1}^{i,l-1}\right|^{\kappa}\Big\{
G_{p-1}^{i,l}\wedge \overline{G}_{p-1}^{i,l-1}
\Big\}
  }{
  \sum_{i\in\mathsf{S}_{p-2}^l}
\Big\{G_{p-1}^{i,l}\wedge \overline{G}_{p-1}^{i,l-1}\Big\}
  }
  \right]
$$
Now, using (A\ref{ass:2}), one can find constants $0<\underline{c}<1<\overline{c}<+\infty$ with $\overline{c}>1/\underline{c}$ independent of $l$ so that almost surely
\begin{equation}\label{r:bound}
\underline{c}^{N_{[p-1,p]}^{\star,l}} \leq R_{p-1,p}^l(\mathbf{x}_{p-1},\Xi^l_{p-1,p},\mathbf{x}_{p}) \leq \overline{c}^{N_{[p-1,p]}^{\star,l}}
\end{equation}
where $N_{[p-1,p]}^{i,\star,l}$ are the number of events of the Poisson process for the $i^{\textrm{th}}-$sample.
Therefore, we have that almost surely
\begin{equation}\label{g:bound}
\frac{\underline{C}\underline{c}^{N_{[p-1,p]}^{i,\star,l}}}{\overline{C}\sum_{j=1}^{S_l} \overline{c}^{N_{[p-1,p]}^{j,\star,l}}} \leq \overline{G}_{p-1}^{i,l-1} \leq
\frac{\overline{C}\overline{c}^{N_{[p-1,p]}^{i,\star,l}}}{\underline{C}\sum_{j=1}^{S_l} \underline{c}^{N_{[p-1,p]}^{j,\star,l}}}.
\end{equation}
and hence that for some $C<+\infty$
$$
 \mathbb{\overline{E}}\left[\frac{1}{S_l}\sum_{i\in\mathsf{S}_{p-1}^l}\left|X_{p-1}^{I_{p-1}^{i,l},l}-\overline{X}_{p-1}^{I_{p-1}^{i,l-1},l-1}\right|^{\kappa}\right]  \leq 
 $$
 $$
 C
 \mathbb{\overline{E}}\left[\frac{\textrm{Card}(\mathsf{S}_{p-1}^l)}{S_l}
  \frac{
\sum_{i\in\mathsf{S}_{p-2}^l}\left|X_{p-1}^{i,l}-\overline{X}_{p-1}^{i,l-1}\right|^{\kappa} \frac{\overline{c}^{N_{[p-1,p]}^{i,\star,l}}}{\sum_{j=1}^N \underline{c}^{N_{[p-1,p]}^{j,\star,l}}}
  }{
  \sum_{i\in\mathsf{S}_{p-2}^l} \frac{\underline{c}^{N_{[p-1,p]}^{i,\star,l}}}{\sum_{j=1}^N \overline{c}^{N_{[p-1,p]}^{j,\star,l}}}
  }
  \right] 
   =  
   $$
   $$
  C
 \mathbb{\overline{E}}\left[\frac{\textrm{Card}(\mathsf{S}_{p-1}^l)}{S_l}
   \frac{\sum_{(i,j)\in\mathsf{S}_{p-2}^l\times[S_l]}
 |X_{p-1}^{i,l}-\overline{X}_{p-1}^{i,l-1}|^{\kappa}\overline{c}^{N_{[p-1,p]}^{i,\star,l}+N_{[p-1,p]}^{j,\star,l}}}
 {\sum_{(i,j)\in\mathsf{S}_{p-2}^l\times[S_l]}
\underline{c}^{N_{[p-1,p]}^{i,\star,l}+N_{[p-1,p]}^{j,\star,l}}}
   \right]
$$
where we use the short-hand $[S_l]=\{1,\dots,S_l\}$.
By construction $\underline{c}\overline{c}\geq 1$, so one has
$$
 \mathbb{\overline{E}}\left[\frac{1}{S_l}\sum_{i\in\mathsf{S}_{p-1}^l}|X_{p-1}^{I_{p-1}^{i,l},l}-\overline{X}_{p-1}^{I_{p-1}^{i,l-1},l-1}|^{\kappa}\right]  \leq 
 $$
$$
C\mathbb{\overline{E}}\left[\frac{\textrm{Card}(\mathsf{S}_{p-1}^l)}{S_l}
   \frac{\sum_{(i,j)\in\mathsf{S}_{p-2}^l\times[S_l]}
 |X_{p-1}^{i,l}-\overline{X}_{p-1}^{i,l-1}|^{\kappa}\overline{c}^{N_{[p-1,p]}^{i,\star,l}+N_{[p-1,p]}^{j,\star,l}}}
 {\sum_{(i,j)\in\mathsf{S}_{p-2}^l\times[S_l]}
\overline{c}^{-N_{[p-1,p]}^{i,\star,l}-N_{[p-1,p]}^{j,\star,l}}}
   \right].
$$
Conditioning on $\mathsf{S}_{p-1}^l$ and filtration generated by the particle system (Algorithm \ref{alg:mlpf}) up-to time $p-1$, one can use the conditional Jensen inequality to arrive at
\begin{eqnarray*}
 \mathbb{\overline{E}}\left[\frac{1}{S_l}\sum_{i\in\mathsf{S}_{p-1}^l}|X_{p-1}^{I_{p-1}^{i,l},l}-\overline{X}_{p-1}^{I_{p-1}^{i,l-1},l-1}|^{\kappa}\right]  & \leq &
C\mathbb{\overline{E}}\Bigg[\frac{\textrm{Card}(\mathsf{S}_{p-1}^l)}{S_l}
\Big(\tfrac{1}{S_l\textrm{Card}(\mathsf{S}_{p-2}^l)}\sum_{(i,j)\in\mathsf{S}_{p-2}^l\times[S_l]}
 |X_{p-1}^{i,l}-\overline{X}_{p-1}^{i,l-1}|^{\kappa}\times \\ & &\overline{c}^{N_{[p-1,p]}^{i,\star,l}+N_{[p-1,p]}^{j,\star,l}}\Big)\Big(
\tfrac{1}{S_l\textrm{Card}(\mathsf{S}_{p-2}^l)} \sum_{(i,j)\in\mathsf{S}_{p-2}^l\times[S_l]}
\overline{c}^{N_{[p-1,p]}^{i,\star,l}+N_{[p-1,p]}^{j,\star,l}}\Big)
   \Bigg].
\end{eqnarray*}
Using the independence of the $N_{[p-1,p]}^{j,\star,l}$, one easily deduces the upper-bound
$$
 \mathbb{\overline{E}}\left[\frac{1}{S_l}\sum_{i\in\mathsf{S}_{p-1}^l}|X_{p-1}^{I_{p-1}^{i,l},l}-\overline{X}_{p-1}^{I_{p-1}^{i,l-1},l-1}|^{\kappa}\right]   \leq
 C \mathbb{\overline{E}}\Bigg[\frac{\textrm{Card}(\mathsf{S}_{p-1}^l)}{S_l}
 \frac{1}{\textrm{Card}(\mathsf{S}_{p-2}^l)}\sum_{i\in\mathsf{S}_{p-2}^l} |X_{p-1}^{i,l}-\overline{X}_{p-1}^{i,l-1}|^{\kappa}
    \Bigg]
$$
and as $\textrm{Card}(\mathsf{S}_{p-1}^l)\leq \textrm{Card}(\mathsf{S}_{p-2}^l)$, the proof is concluded by induction.
\end{proof}

\begin{lem}\label{lem:3}
Assume (A\ref{ass:1}-\ref{ass:3}).  Then for any $(\varphi,p,\kappa)\in\mathcal{B}_b(\mathsf{X})\cap\textrm{\emph{Lip}}(\mathsf{X})\times\{1,\dots,T\}\times(0,\infty)$ there exists a $C<+\infty$ such that for any $(l,S_l)\in\{2,3,\dots\}\times\mathbb{N}$:
$$
\mathbb{\overline{E}}\left[\frac{1}{S_l}\sum_{i\in\mathsf{S}_{p-1}}|\varphi(X_p^{i,l})-\varphi(\overline{X}_{p}^{i,l-1})
R_{p-1,p}^l(\mathbf{X}_{p-1}^{i,l},\Xi_{p-1,p}^{i,l},\mathbf{X}_{p}^{i,l})|^{\kappa}\right]
\leq C\Delta_l^{\kappa}.
$$
\end{lem}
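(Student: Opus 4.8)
The plan is to run the same induction on $p$ that underpins Lemma~\ref{lem:2}, with (A\ref{ass:3}) now doing the work that (A\ref{ass:1}) and Lemma~\ref{lem:1} did there. The base case $p=1$ is immediate: by construction every particle satisfies $\mathbf{X}_0^{i,l}=(x_0,x_0)$ and $\mathsf{S}_0^l=\{1,\dots,S_l\}$, so applying (A\ref{ass:3}) with $(x,\overline{x})=(x_0,x_0)$ to each of the $S_1$ conditionally independent particle transitions on $[0,1]$ and averaging yields the bound $C\Delta_l^{\kappa}$ directly, the term $|x-\overline{x}|^{\kappa}$ being zero.

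For the inductive step, assume the bound holds at rank $p-1$ and recall Lemma~\ref{lem:2} together with the estimates in its proof. Condition on the $\sigma$-algebra $\mathcal{F}_{p-1}$ generated by Algorithm~\ref{alg:mlpf} up to and including the resampling at time $p-1$; given $\mathcal{F}_{p-1}$ the set $\mathsf{S}_{p-1}^l$ and the resampled starting points are fixed. For $i\in\mathsf{S}_{p-1}^l$ the pair $(X_p^{i,l},\overline{X}_p^{i,l-1})$ is, conditionally on $\mathcal{F}_{p-1}$, an independent copy of the coupled discretized process run on $[p-1,p]$ from $\mathbf{X}_{p-1}^{i,l}=(\hat{x}_{p-1}^{i,l},\hat{\overline{x}}_{p-1}^{i,l-1})$ via the time-discretized version of Algorithm~\ref{alg:pdmp}, so (A\ref{ass:3}) applies per particle and
$$
\mathbb{\overline{E}}\left[\frac{1}{S_l}\sum_{i\in\mathsf{S}_{p-1}^l}\left|\varphi(X_p^{i,l})-\varphi(\overline{X}_p^{i,l-1})R_{p-1,p}^l\right|^{\kappa}\mid\mathcal{F}_{p-1}\right]\leq\frac{C}{S_l}\sum_{i\in\mathsf{S}_{p-1}^l}\left(|\hat{x}_{p-1}^{i,l}-\hat{\overline{x}}_{p-1}^{i,l-1}|^{\kappa}+\Delta_l^{\kappa}\right).
$$
Taking expectations and using $\textrm{Card}(\mathsf{S}_{p-1}^l)\leq S_l$ controls the $\Delta_l^{\kappa}$ part by $C\Delta_l^{\kappa}$, and since $\hat{x}_{p-1}^{i,l}=X_{p-1}^{I_{p-1}^{i,l},l}$, $\hat{\overline{x}}_{p-1}^{i,l-1}=\overline{X}_{p-1}^{I_{p-1}^{i,l-1},l-1}$ with $I_{p-1}^{i,l}=I_{p-1}^{i,l-1}$ on $\mathsf{S}_{p-1}^l$, what remains is the term $\mathbb{\overline{E}}\big[\tfrac{1}{S_l}\sum_{i\in\mathsf{S}_{p-1}^l}|X_{p-1}^{I_{p-1}^{i,l},l}-\overline{X}_{p-1}^{I_{p-1}^{i,l-1},l-1}|^{\kappa}\big]$.

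This is exactly the quantity bounded by $C\Delta_l^{\kappa}$ in the course of proving Lemma~\ref{lem:2}: one expands the maximal-coupling resampling step (Algorithm~\ref{alg:max_coup}), uses the two-sided bounds \eqref{r:bound}--\eqref{g:bound} on $R_{p-1,p}^l$ and on the coarse weights $\overline{G}_{p-1}^{i,l-1}$, applies a conditional Jensen inequality, invokes the independence of the Poisson event counts $N_{[p-1,p]}^{j,\star,l}$ together with $\underline{c}\overline{c}\geq1$, and finally uses $\textrm{Card}(\mathsf{S}_{p-1}^l)\leq\textrm{Card}(\mathsf{S}_{p-2}^l)$ to reduce to $C\,\mathbb{\overline{E}}\big[\tfrac{1}{S_l}\sum_{i\in\mathsf{S}_{p-2}^l}|X_{p-1}^{i,l}-\overline{X}_{p-1}^{i,l-1}|^{\kappa}\big]\leq C\Delta_l^{\kappa}$ by Lemma~\ref{lem:2}. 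Adding the two contributions closes the induction. The only points that need care are the per-particle applicability of (A\ref{ass:3}) — valid because the sampling step of Algorithm~\ref{alg:mlpf} produces, conditionally on the past, genuine copies of the coupled discretized process — and the bookkeeping through the ancestral sets $\mathsf{S}_k^l$; I do not expect a substantive obstacle here, since the argument is a direct combination of (A\ref{ass:3}) with the machinery already assembled for Lemma~\ref{lem:2}.
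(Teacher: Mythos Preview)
Your proposal is correct and follows essentially the same route as the paper's own proof, which simply states that the result follows by using (A\ref{ass:3}) together with the calculations already carried out in Lemma~\ref{lem:2}. You have spelled out those calculations in detail (conditioning on the resampled system, applying (A\ref{ass:3}) particle-wise, then feeding the residual term back into the resampling estimates from the proof of Lemma~\ref{lem:2}); note in passing that the inductive hypothesis you invoke for Lemma~\ref{lem:3} at rank $p-1$ is never actually used, since you close the argument via Lemma~\ref{lem:2} directly.
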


\begin{proof}
This result follows by using (A\ref{ass:3}) along with the calculations in Lemma \ref{lem:2}. As these latter calculations are repeated, we omit them for brevity.
\end{proof}

\begin{lem}\label{lem:4}
Assume (A\ref{ass:1}-\ref{ass:3}).  Then for any $p\in\{0,\dots,T\}$ there exists a $C<+\infty$ such that for any $(l,S_l)\in\{2,3,\dots\}\times\mathbb{N}$:
$$
1- \mathbb{\overline{E}}\left[\frac{\textrm{\emph{Card}}(\mathsf{S}_{p})}{S_l}\right]
\leq C\Delta_l.
$$
\end{lem}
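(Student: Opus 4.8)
The plan is to bound the expected fraction of particles that fail to preserve a common ancestor at some resampling step up to time $p$, by exploiting the maximal coupling resampling (Algorithm \ref{alg:max_coup}) and the fact that the probability a coordinate pair ``splits'' at step $k$ is controlled by $\sum_i |G_k^{i,l} - \overline{G}_k^{i,l-1}|$. First I would set up the recursion: write $\textrm{Card}(\mathsf{S}_p^l) = \sum_{i\in\mathsf{S}_{p-1}^l}\mathbb{I}\{I_p^l(i) = I_p^{l-1}(i)\}$ and note that, by construction of Algorithm \ref{alg:max_coup}, conditionally on the particle system up to time $p$ the number of indices drawn from the ``common'' part is governed by $\sum_{j=1}^{S_l}\min\{G_p^{j,l},\overline{G}_p^{j,l-1}\} = 1 - \tfrac12\sum_{j=1}^{S_l}|G_p^{j,l}-\overline{G}_p^{j,l-1}|$. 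Hence
$$
1 - \mathbb{\overline{E}}\left[\frac{\textrm{Card}(\mathsf{S}_p^l)}{S_l}\right] \leq \left(1 - \mathbb{\overline{E}}\left[\frac{\textrm{Card}(\mathsf{S}_{p-1}^l)}{S_l}\right]\right) + \mathbb{\overline{E}}\left[\frac{\textrm{Card}(\mathsf{S}_{p-1}^l)}{S_l}\cdot\frac12\sum_{j=1}^{S_l}\left|G_p^{j,l}-\overline{G}_p^{j,l-1}\right|\right],
$$
so it suffices to show the second term on the right is $\mathcal{O}(\Delta_l)$, and then close the induction (the base case $p=0$ being trivial since $\mathsf{S}_0^l = \{1,\dots,S_l\}$).

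Next I would control $\sum_j |G_p^{j,l} - \overline{G}_p^{j,l-1}|$. Using the explicit forms of the normalized weights in Algorithm \ref{alg:mlpf}, together with the bounds (A\ref{ass:2}) on $g$ and the bound \eqref{r:bound} on $R_{p-1,p}^l$ (so numerators and denominators are bounded above and below by constants times $\overline{c}^{\,\pm N^{\star,l}}$ sums), the difference $|G_p^{j,l} - \overline{G}_p^{j,l-1}|$ can be bounded, after the standard algebra for ratios of sums, by $C/S_l$ times a sum over $i$ of $|g(x_p^{i,l},y_p) - g(\overline{x}_p^{i,l-1},y_p)R_{p-1,p}^l|$ weighted by $\overline{c}^{\,N^{\star,l}}$-type factors — i.e. by the same kind of expression already estimated in Lemma \ref{lem:3} (with $\varphi \equiv$ one coordinate of $g$, which is Lipschitz and bounded by (A\ref{ass:2})). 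Crucially, the indices $j$ that contribute come from $\mathsf{S}_{p-1}^l$-descendants, for which Lemma \ref{lem:2} and Lemma \ref{lem:3} give the $\Delta_l$ rate; the indices outside $\mathsf{S}_{p-1}^l$ are already ``split'' and are accounted for by the first term of the recursion. Applying the conditional Jensen and Poisson-moment arguments exactly as in the proof of Lemma \ref{lem:2} to handle the $\overline{c}^{\,N^{\star,l}}$ weights, one arrives at $\mathbb{\overline{E}}[\textrm{Card}(\mathsf{S}_{p-1}^l)S_l^{-1}\sum_j |G_p^{j,l}-\overline{G}_p^{j,l-1}|] \leq C\Delta_l$.

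The main obstacle I anticipate is the bookkeeping linking the ``splitting probability'' at step $p$ to a quantity that is genuinely $\mathcal{O}(\Delta_l)$ rather than $\mathcal{O}(1)$: one must be careful that the sum $\sum_j|G_p^{j,l}-\overline{G}_p^{j,l-1}|$ is restricted (via the $\textrm{Card}(\mathsf{S}_{p-1}^l)/S_l$ prefactor and the structure of the weights) to contributions from coupled ancestral lines, so that Lemma \ref{lem:2}/\ref{lem:3} can be invoked — otherwise the bound is vacuous. The secondary technical point is reproducing the ratio-of-sums and conditional-Jensen manipulation from Lemma \ref{lem:2} to neutralize the random $\overline{c}^{\,N^{\star,l}}$ factors; since that calculation is already carried out there, I would simply cite it rather than repeat it, stating ``by the same reasoning as in Lemma \ref{lem:2}'' and concluding the induction.
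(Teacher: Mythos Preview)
Your overall strategy---induction on $p$, reducing to a bound on $\sum_j|G_p^{j,l}-\overline{G}_p^{j,l-1}|$, splitting that sum into indices in $\mathsf{S}_{p-1}^l$ and its complement, invoking Lemma~\ref{lem:3} on the coupled part and the induction hypothesis on the uncoupled part, and handling the $\overline{c}^{\,N^{\star,l}}$ factors via the conditional Jensen/Poisson-moment argument from Lemma~\ref{lem:2}---is exactly the route the paper takes.

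There is, however, a genuine error in your starting identity. You write $\textrm{Card}(\mathsf{S}_p^l)=\sum_{i\in\mathsf{S}_{p-1}^l}\mathbb{I}\{I_p^l(i)=I_p^{l-1}(i)\}$, but this confuses the roles of the index before and after resampling: $i$ in $\mathsf{S}_p^l$ is a \emph{new} particle index, and $i\in\mathsf{S}_p^l$ holds iff $I_p^l(i)=I_p^{l-1}(i)$ \emph{and} that common ancestor index lies in $\mathsf{S}_{p-1}^l$. The correct conditional expectation is therefore $\mathbb{\overline{E}}\big[\textrm{Card}(\mathsf{S}_p^l)/S_l\,\big|\,\text{weights}\big]=\sum_{j\in\mathsf{S}_{p-1}^l}G_p^{j,l}\wedge\overline{G}_p^{j,l-1}$, and the recursion you display---with the prefactor $\textrm{Card}(\mathsf{S}_{p-1}^l)/S_l$ multiplying the total-variation term---does not follow from it (indeed it can fail when the weights concentrate on indices outside $\mathsf{S}_{p-1}^l$). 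The paper instead writes $1-\sum_{j\in\mathsf{S}_{p-1}^l}\min=\big(1-\sum_{j=1}^{S_l}\min\big)+\sum_{j\notin\mathsf{S}_{p-1}^l}\min$, controls the first piece via $\tfrac12\sum_j|G-\overline{G}|$ split over $\mathsf{S}_{p-1}^l$ and its complement, and bounds the ``complement'' contributions by $C\big(1-\mathbb{\overline{E}}[\textrm{Card}(\mathsf{S}_{p-1}^l)/S_l]\big)$ using $G_p^{j,l}\leq C/S_l$ together with the same Poisson-moment manipulation you cite. Once you replace your identity and recursion with this, the rest of your plan goes through unchanged.
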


\begin{proof}
The calculations of this proof follow \cite[Lemma D.4.]{mlpf} quite closely and so we skip some steps.
Note that the claim is trivially true at rank $p=0$, so we prove the result by induction on $p$.
We note that
\begin{equation}\label{eq:lem1}
1-\sum_{i=1}^{S_l}G_p^{i,l}\wedge\overline{G}_{p}^{i,l-1} = \frac{1}{2}\sum_{i\in\mathsf{S}_{p-1}^l}\big|G_p^{i,l}-\overline{G}_{p}^{i,l-1}\big| + \frac{1}{2}\sum_{i\notin\mathsf{S}_{p-1}^l}\big|G_p^{i,l}-\overline{G}_{p}^{i,l-1}\big|.
\end{equation}
As
\begin{equation}\label{eq:lem2}
1- \mathbb{\overline{E}}\left[\frac{\textrm{Card}(\mathsf{S}_{p})}{S_l}\right] = 
1 - \mathbb{\overline{E}}\left[\sum_{i=1}^{S_l}G_p^{i,l}\wedge\overline{G}_{p}^{i,l-1}\right] + 
\mathbb{\overline{E}}\left[\sum_{i\notin\mathsf{S}_{p-1}^l}\big|G_p^{i,l}-\overline{G}_{p}^{i,l-1}\big|\right]
\end{equation}
we shall focus on upper-bounding the two terms
\begin{eqnarray}
E_1 & = & \mathbb{\overline{E}}\left[\sum_{i\notin\mathsf{S}_{p-1}^l}\big|G_p^{i,l}-\overline{G}_{p}^{i,l-1}\big|\right]\label{eq:lem3}\\
E_2 & = & \mathbb{\overline{E}}\left[\sum_{i\in\mathsf{S}_{p-1}^l}\big|G_p^{i,l}-\overline{G}_{p}^{i,l-1}\big|\right]
\label{eq:lem4}
\end{eqnarray}
respectively.

For the term $E_1$, we note that by using the triangular inequality, along with $G_p^{i,l}\leq\tfrac{C}{S_l}$ (by (A\ref{ass:2})) and then using the upper-bound in 
\eqref{g:bound}, we have
$$
E_1 \leq C\left(1-\mathbb{\overline{E}}\left[\frac{\textrm{Card}(\mathsf{S}_{p-1})}{S_l}\right]
+ \mathbb{\overline{E}}\left[\sum_{i\notin\mathsf{S}_{p-1}^l}\frac{\overline{c}^{N_{[p-1,p]}^{i,\star,l}}}{\sum_{j=1}^{S_l}
\underline{c}^{N_{[p-1,p]}^{j,\star,l}}}\right]
\right).
$$
As in the proof of Lemma \ref{lem:3}, using $\underline{c}\overline{c}\geq 1$, along with the conditional Jensen inequality we obtain the upper-bound:
$$
E_1 \leq C\left(1-\mathbb{\overline{E}}\left[\frac{\textrm{Card}(\mathsf{S}_{p-1})}{S_l}\right]
+ 
\mathbb{\overline{E}}\left[\left(\tfrac{1}{S_l}\sum_{i\notin\mathsf{S}_{p-1}^l}\overline{c}^{N_{[p-1,p]}^{i,\star,l}}\right)
\left(\tfrac{1}{S_l}
\sum_{j=1}^{S_l}
\overline{c}^{N_{[p-1,p]}^{j,\star,l}}
\right)\right]
\right).
$$
Again using the independence of the $N_{[p-1,p]}^{j,\star,l}$, one has taking expectations w.r.t.~these random variables that
\begin{equation}\label{eq:lem5}
E_1 \leq C\left(1-\mathbb{\overline{E}}\left[\frac{\textrm{Card}(\mathsf{S}_{p-1})}{S_l}\right]\right).
\end{equation}

For $E_2$, we have the decomposition $E_2\leq E_3+E_4$ where
\begin{eqnarray*}
E_3 & = &  \mathbb{\overline{E}}\left[\sum_{i\in\mathsf{S}_{p-1}^l}\frac{|g(X_p^{i,l},y_p)-g(\overline{X}_p^{i,l-1},y_p)R_{p-1,p}^l(\mathbf{X}_{p-1}^{i,l},\Xi_{p-1,p}^{i,l},\mathbf{X}_p^{i,l})|}{\sum_{j=1}^{S_l}g(X_p^{j,l},y_p)}\right]\\
E_4 & = & \mathbb{\overline{E}}\Bigg[
\sum_{i\in\mathsf{S}_{p-1}^l}\frac{
g(\overline{X}_p^{i,l-1},y_p)R_{p-1,p}^l(\mathbf{X}_{p-1}^{i,l},\Xi_{p-1,p}^{i,l},\mathbf{X}_p^{i,l})
}
{
\{\sum_{j=1}^{S_l}g(X_p^{j,l},y_p)\}
\{\sum_{j=1}^{S_l} g(\overline{X}_p^{j,l-1},y_p)R_{p-1,p}^l(\mathbf{X}_{p-1}^{j,l},\Xi_{p-1,p}^{j,l},\mathbf{X}_p^{j,l})\}
}\times \\ & & 
\sum_{j=1}^{S_l}|g(X_p^{j,l},y_p)-g(\overline{X}_p^{j,l-1},y_p)R_{p-1,p}^l(\mathbf{X}_{p-1}^{j,l},\Xi_{p-1,p}^{j,l},\mathbf{X}_p^{j,l})|
\Bigg].
\end{eqnarray*}
In the case of $E_3$ using (A\ref{ass:2}) we have the upper-bound
$$
E_3 \leq C\mathbb{\overline{E}}\left[\frac{1}{S_l}\sum_{i\in\mathsf{S}_{p-1}^l}|g(X_p^{i,l},y_p)-g(\overline{X}_p^{i,l-1},y_p)R_{p-1,p}^l(\mathbf{X}_{p-1}^{i,l},\Xi_{p-1,p}^{i,l},\mathbf{X}_p^{i,l})|\right]
$$
and so applying Lemma \ref{lem:3} we deduce that
$$
E_3 \leq C\Delta_l.
$$
For $E_4$, using (A\ref{ass:2}), and \eqref{g:bound}  we obtain
$$
E_4\leq C\mathbb{\overline{E}}\Bigg[\Bigg(
\sum_{i\in\mathsf{S}_{p-1}^l}\frac{\overline{c}^{N_{[p-1,p]}^{i,\star,l}}}{\sum_{j=1}^{S_l}
\underline{c}^{N_{[p-1,p]}^{j,\star,l}}}
\Bigg)
\Bigg(
\frac{1}{S_l}\sum_{j=1}^{S_l}|g(X_p^{j,l},y_p)-g(\overline{X}_p^{j,l-1},y_p)R_{p-1,p}^l(\mathbf{X}_{p-1}^{j,l},\Xi_{p-1,p}^{j,l},\mathbf{X}_p^{j,l})|
\Bigg)
\Bigg].
$$
Then splitting the summation over $j$ in the numerator between $\mathsf{S}_{p-1}^l$ and $(\mathsf{S}_{p-1}^l)^c$ and one application of the Cauchy-Schwarz inequality, one can deduce the upper-bound
$E_4\leq E_5+E_6$ where
\begin{eqnarray*}
E_5 & = & \mathbb{\overline{E}}\Bigg[\Bigg(
\sum_{i\in\mathsf{S}_{p-1}^l}\frac{\overline{c}^{N_{[p-1,p]}^{i,\star,l}}}{\sum_{j=1}^{S_l}
\underline{c}^{N_{[p-1,p]}^{j,\star,l}}}
\Bigg)^2\Bigg]^{1/2}\times \\ & & \mathbb{\overline{E}}\left[
\Bigg(
\frac{1}{S_l}\sum_{j\in\mathsf{S}_{p-1}^l}|g(X_p^{j,l},y_p)-g(\overline{X}_p^{j,l-1},y_p)R_{p-1,p}^l(\mathbf{X}_{p-1}^{j,l},\Xi_{p-1,p}^{j,l},\mathbf{X}_p^{j,l})|
\Bigg)^2
\right]^{1/2}\\
E_6 & = & \mathbb{\overline{E}}\Bigg[\Bigg(
\sum_{i\in\mathsf{S}_{p-1}^l}\frac{\overline{c}^{N_{[p-1,p]}^{i,\star,l}}}{\sum_{j=1}^{S_l}
\underline{c}^{N_{[p-1,p]}^{j,\star,l}}}
\Bigg)
\Bigg(
\frac{1}{S_l}\sum_{j\notin\mathsf{S}_{p-1}^l}|g(X_p^{j,l},y_p)-g(\overline{X}_p^{j,l-1},y_p)R_{p-1,p}^l(\mathbf{X}_{p-1}^{j,l},\Xi_{p-1,p}^{j,l},\mathbf{X}_p^{j,l})|
\Bigg)
\Bigg].
\end{eqnarray*}
For $E_5$ the left expectation can be controlled by using similar arguments as adopted for $E_1$ and the right expectation by an application of the Jensen inequality and Lemma \ref{lem:3} to yield:
$$
E_5 \leq C\Delta_l.
$$
For $E_6$, by (A\ref{ass:2}), and \eqref{g:bound} we have the almost sure upper-bound
$$
|g(X_p^{j,l},y_p)-g(\overline{X}_p^{j,l-1},y_p)R_{p-1,p}^l(\mathbf{X}_{p-1}^{j,l},\Xi_{p-1,p}^{j,l},\mathbf{X}_p^{j,l})| \leq
C\left(1+\overline{c}^{N_{[p-1,p]}^{j,\star,l}}\right).
$$
Therefore we have that
$$
E_6 \leq
 C\mathbb{\overline{E}}\Bigg[\Bigg(
\sum_{i\in\mathsf{S}_{p-1}^l}\frac{\overline{c}^{N_{[p-1,p]}^{i,\star,l}}}{\sum_{j=1}^{S_l}
\underline{c}^{N_{[p-1,p]}^{j,\star,l}}}
\Bigg)
\Bigg(1-\frac{\textrm{Card}(\mathsf{S}_{p-1}^l)}{S_l} +
\frac{1}{S_l}\sum_{j\notin\mathsf{S}_{p-1}^l}\overline{c}^{N_{[p-1,p]}^{j,\star,l}}
\Bigg)
\Bigg].
$$
Using similar arguments as used to control $E_1$ one can prove that
$$
E_6 \leq C\left(1-\mathbb{\overline{E}}\left[\frac{\textrm{Card}(\mathsf{S}_{p-1})}{S_l}\right]\right).
$$
Summarizing the above upper-bounds on $E_3,\dots,E_6$, we have shown that
\begin{equation}\label{eq:lem6}
E_2 \leq C\left(1-\mathbb{\overline{E}}\left[\frac{\textrm{Card}(\mathsf{S}_{p-1})}{S_l}\right]+\Delta_l\right).
\end{equation}

Now, noting the relations \eqref{eq:lem1}-\eqref{eq:lem4} and combining these with the bounds 
\eqref{eq:lem5} and \eqref{eq:lem6}, we have shown that
$$
1- \mathbb{\overline{E}}\left[\frac{\textrm{Card}(\mathsf{S}_{p})}{S_l}\right]
\leq C\left(1-\mathbb{\overline{E}}\left[\frac{\textrm{Card}(\mathsf{S}_{p-1})}{S_l}\right]+\Delta_l\right).
$$
Therefore, the proof is concluded by induction.
\end{proof}

\begin{lem}\label{lem:5}
Assume (A\ref{ass:1}-\ref{ass:3}).  Then for any $(\varphi,p,\kappa)\in\mathcal{B}_b(\mathsf{X})\cap\textrm{\emph{Lip}}(\mathsf{X})\times\{1,\dots,T\}\times(0,\infty)$ there exists a $C<+\infty$ such that for any $(l,S_l)\in\{2,3,\dots\}\times\mathbb{N}$:
$$
\mathbb{\overline{E}}\left[|\varphi(X_p^{1,l})-\varphi(\overline{X}_{p}^{1,l-1})
R_{p-1,p}^l(\mathbf{X}_{p-1}^{1,l},\Xi_{p-1,p}^{1,l},\mathbf{X}_{p}^{1,l})|^{\kappa}\right]
\leq C\Delta_l^{\kappa\wedge 1}.
$$
\end{lem}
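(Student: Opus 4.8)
The plan is to decompose the expectation according to whether the first particle lies in $\mathsf{S}_{p-1}^l$ or not. Writing $A := \left|\varphi(X_p^{1,l})-\varphi(\overline{X}_{p}^{1,l-1})R_{p-1,p}^l(\mathbf{X}_{p-1}^{1,l},\Xi_{p-1,p}^{1,l},\mathbf{X}_{p}^{1,l})\right|^{\kappa}$ for brevity, I would split
$$
\mathbb{\overline{E}}[A]=\mathbb{\overline{E}}\left[\mathbb{I}_{\{1\in\mathsf{S}_{p-1}^l\}}A\right]+\mathbb{\overline{E}}\left[\mathbb{I}_{\{1\notin\mathsf{S}_{p-1}^l\}}A\right]
$$
and bound the two pieces separately. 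For the first one, exchangeability of the particle labels in Algorithm \ref{alg:mlpf} gives
$$
\mathbb{\overline{E}}\left[\mathbb{I}_{\{1\in\mathsf{S}_{p-1}^l\}}A\right]=\mathbb{\overline{E}}\left[\frac{1}{S_l}\sum_{i\in\mathsf{S}_{p-1}^l}\left|\varphi(X_p^{i,l})-\varphi(\overline{X}_{p}^{i,l-1})R_{p-1,p}^l(\mathbf{X}_{p-1}^{i,l},\Xi_{p-1,p}^{i,l},\mathbf{X}_{p}^{i,l})\right|^{\kappa}\right],
$$
which is at most $C\Delta_l^{\kappa}$ by Lemma \ref{lem:3}.

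For the second piece I would first observe that, since $\varphi\in\mathcal{B}_b(\mathsf{X})$ and, by (A\ref{ass:2}) together with \eqref{r:bound}, $R_{p-1,p}^l(\mathbf{X}_{p-1}^{1,l},\Xi_{p-1,p}^{1,l},\mathbf{X}_{p}^{1,l})\leq\overline{c}^{N_{[p-1,p]}^{1,\star,l}}$ almost surely, one has the pointwise bound $A\leq C\left(1+\overline{c}^{\kappa N_{[p-1,p]}^{1,\star,l}}\right)$ (using also $(a+b)^{\kappa}\leq C(a^{\kappa}+b^{\kappa})$). The key point is that $N_{[p-1,p]}^{1,\star,l}$ is generated afresh in the sampling step of Algorithm \ref{alg:mlpf} over $[p-1,p]$, hence is independent of the event $\{1\notin\mathsf{S}_{p-1}^l\}$, which is measurable with respect to the resampling operations performed up to time $p-1$. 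Since $N_{[p-1,p]}^{1,\star,l}$ is Poisson with parameter $\lambda^{\star}$ and therefore has finite exponential moments of every order, it follows that $\mathbb{\overline{E}}\left[\mathbb{I}_{\{1\notin\mathsf{S}_{p-1}^l\}}A\right]\leq C\,\mathbb{P}(1\notin\mathsf{S}_{p-1}^l)$. By exchangeability $\mathbb{P}(1\notin\mathsf{S}_{p-1}^l)=1-\mathbb{\overline{E}}\left[\textrm{Card}(\mathsf{S}_{p-1}^l)/S_l\right]$, which Lemma \ref{lem:4} bounds by $C\Delta_l$. Combining the two pieces and using $\Delta_l\leq 1$, so that $\Delta_l^{\kappa}+\Delta_l\leq 2\Delta_l^{\kappa\wedge 1}$, concludes the argument.

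I expect the main obstacle to be making the independence assertion used in the second piece fully rigorous: one has to check, from the construction of Algorithm \ref{alg:mlpf}, that the family of Poisson event counts over $[p-1,p]$ is drawn after, and independently of, the resampling step at time $p-1$, so that $N_{[p-1,p]}^{1,\star,l}$ is indeed independent of $\mathbb{I}_{\{1\notin\mathsf{S}_{p-1}^l\}}$. Everything else is a routine combination of Lemmata \ref{lem:3} and \ref{lem:4} with the exponential moment bound for $N^{\star}$ and the elementary sub-additivity estimate above.
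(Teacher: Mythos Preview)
Your proposal is correct and follows essentially the same route the paper intends: the paper's proof is only sketched, but it explicitly points to using Lemmata \ref{lem:3} and \ref{lem:4} together with the almost-sure bound \eqref{r:bound} on $R_{p-1,p}^l$, which is precisely your decomposition into $\{1\in\mathsf{S}_{p-1}^l\}$ and its complement. Your exchangeability identities and the conditional independence of $N_{[p-1,p]}^{1,\star,l}$ from the resampling history up to time $p-1$ are exactly the ingredients behind \cite[Theorem D.5]{mlpf}, so there is nothing materially different here.
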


\begin{proof}
This follows in a similar manner to the proof of \cite[Theorem D.5.]{mlpf}, except we must use Lemmata \ref{lem:3} and \ref{lem:4} of this paper and the arguments employed in the previous proofs (e.g.~Lemma \ref{lem:4}) to control the almost sure upper-bound on $R_{p-1,p}^l$. As these arguments are repetitive, we omit them for brevity.
\end{proof}

%

\subsection{Convergence Proofs}\label{app:conv}

In order to present our subsequent results, we will introduce several notations. 
Writing $g_k(x)=g(x,y_k)$,
we define the sequence of probability measures $\eta_1(dx) = M_1(x_0,dx)$ and for $p\in\{2,\dots,T\}$
$$
\eta_p(dx) = \frac{\int_{\mathsf{X}}\eta_{p-1}(d\overline{x})g_{p-1}(\overline{x})M_p(\overline{x},dx)}{\int_{\mathsf{X}}\eta_{p-1}(d\overline{x})g_{p-1}(\overline{x})}.
$$
The approximate measures $\eta_p^l$ are defined in the same way, except replacing $M_1,\dots,M_T$
with $M_1^l,\dots,M_T^l$. We use the notations 
$\eta_p(\varphi)=\int_{\mathsf{X}}\varphi(x)\eta_p(dx)$, 
$\eta_p^l(\varphi)=\int_{\mathsf{X}}\varphi(x)\eta_p^l(dx)$, 
with $\varphi\in\mathcal{B}_b(\mathsf{X})$. 

For $(p,l)\in\{1,\dots,T\}\times\mathbb{N}$, we introduce the space of $\Xi_{p-1,p}^l$:
$$
\mathsf{T}_p := \bigcup_{k\geq 0}\left(\{k\}\times[p-1,p]^{k}\times\Big\{\bigcup_{q=0}^k\Big[\{q\}\times\{1,\dots,k\}^q\times\mathsf{X}^{2q}\Big]\Big\}\right).
$$
Denote the corresponding $\sigma-$field that is generated by $\mathsf{T}_p$ as $\mathscr{T}_p$.
For $(p,l)\in\{1,\dots,T\}\times\{2,3,\dots\}$, we define $\check{M}_p^l:\mathsf{X}^2\times\mathscr{X}\vee\mathscr{T}_p\vee\mathscr{X}\rightarrow[0,1]$
as the kernel that generates $\Psi_p^l=(\Xi_{p-1,p}^l,\mathbf{X}_{p}^{l})$. Define $\overline{M}_p^l:\mathsf{X}^4\times\mathscr{X}\vee\mathscr{T}_p\vee\mathscr{X}\rightarrow[0,1]$ as:
$$
\overline{M}_p^l\left(\Big((x_{p-1}^l,\overline{x}_{p-1}^{l-1}),(\check{x}_{p-1}^l,\check{\overline{x}}_{p-1}^{l-1})\Big),d\psi_p^l\right) = \check{M}_p^l\left((x_{p-1}^l,\check{\overline{x}}_{p-1}^{l-1}),d\psi_p^l\right)
$$
where $\Big((x_{p-1}^l,\overline{x}_{p-1}^{l-1}),(\check{x}_{p-1}^l,\check{\overline{w}}_{p-1}^{l-1})\Big)\in\mathsf{X}^2\times\mathsf{X}^2$.
Then for $\mu$ a probability measure on $\mathsf{E}_{p-1}:=\mathsf{X}^2\times\mathsf{T}_{p-1}\times\mathsf{X}^2$ 
(which generates the $\sigma-$field $\mathscr{E}_{p-1}$)
and 
$(p,l)\in\{2,\dots,T\}\times\{2,3,\dots\}$ we write the probability measure on $(\mathsf{E}_p,\mathscr{E}_p)$
$$
\check{\Theta}_p^{l}(\mu)(d(\tilde{\mathbf{x}}_{p-1}^l,\psi_p^l)) :=  
$$
$$
\int_{\mathsf{E}_{p-1}} 
\{G_{p-1,\mu}^l(x_{p-1}^l)\wedge \overline{G}_{p-1,\mu}^{l-1}(\mathbf{x}_{p-2}^l,\psi_{p-1}^l)\}
\check{M}_p^l\left((x_{p-1}^l,\overline{x}_{p-1}^{l-1}),d\psi_p^l\right)
\delta_{\{\mathbf{x}_{p-1}^l\}}(d\tilde{\mathbf{x}}_{p-1}^l)
\mu(d(\mathbf{x}_{p-2}^l,\psi_{p-1}^l)) + 
$$
$$
\Bigg(1-
\int_{\mathsf{E}_{p-1}} 
\{G_{p-1,\mu}^l(x_{p-1}^l)\wedge \overline{G}_{p-1,\mu}^{l-1}(\mathbf{x}_{p-2}^l,\psi_{p-1}^l)\}
\mu(d(\mathbf{x}_{p-2}^l,\psi_{p-1}^l))\Bigg)
\int_{\mathsf{E}_{p-1}\times \mathsf{E}_{p-1}} \check{G}_{p-1,\mu}^l(\mathbf{x}_{p-2}^l,\psi_{p-1}^l)
\check{\overline{G}}_{p-1,\mu}^{l-1}(\check{\mathbf{x}}_{p-2}^l,\check{\psi}_{p-1}^{l}) 
$$
$$
\overline{M}_p^l\left(\Big((x_{p-1}^l,\overline{x}_{p-1}^{l-1}),(\check{x}_{p-1}^l,\check{\overline{x}}_{p-1}^{l-1})\Big),d\psi_p^l\right)\delta_{\{(x_{p-1}^l,\check{\overline{x}}_{p-1}^l)\}}(d\tilde{\mathbf{x}}_{p-1}^{l})
\mu(d(\mathbf{x}_{p-2}^l,\psi_{p-1}^l))\mu(d(\check{\mathbf{x}}_{p-2}^l,\check{\psi}_{p-1}^{l})))
$$
where
\begin{eqnarray*}
\check{G}_{p-1,\mu}^l(\mathbf{x}_{p-2}^l,\psi_{p-1}^l) & = & \frac{G_{p-1,\mu}^l(x_{p-1}^l)-G_{p-1,\mu}^l(x_{p-1}^l)\wedge \overline{G}_{p-1,\mu}^{l-1}(\mathbf{x}_{p-2}^l,\psi_{p-1}^l)}{\int_{\mathsf{E}_{p-1}}\{G_{p-1,\mu}^l(x_{p-1}^l)-G_{p-1,\mu}^l(x_{p-1}^l)\wedge \overline{G}_{p-1,\mu}^{l-1}(\mathbf{x}_{p-2}^l,\psi_{p-1}^l)\}\mu(d(\mathbf{x}_{p-2}^l,\psi_{p-1}^l))} \\
\check{\overline{G}}_{p-1,\mu}^{l-1}(\mathbf{x}_{p-2}^l,\psi_{p-1}^l) & = & 
\frac{\overline{G}_{p-1,\mu}^{l-1}(\mathbf{x}_{p-2}^l,\psi_{p-1}^l)-G_{p-1,\mu}^l(x_{p-1}^l)\wedge \overline{G}_{p-1,\mu}^{l-1}(\mathbf{x}_{p-2}^l,\psi_{p-1}^l)}{\int_{\mathsf{E}_{p-1}}\{\overline{G}_{p-1,\mu}^{l-1}(\mathbf{x}_{p-2}^l,\psi_{p-1}^l)-G_{p-1,\mu}^l(x_{p-1}^l)\wedge \overline{G}_{p-1,\mu}^{l-1}(\mathbf{x}_{p-2}^l,\psi_{p-1}^l)\}\mu(d(\mathbf{x}_{p-2}^l,\psi_{p-1}^l))}\\
G_{p-1,\mu}^l(x_{p-1}^l) & = & \frac{g_{p-1}(x_{p-1}^l)}{\int_{\mathsf{E}_{p-1}}g_{p-1}(x_{p-1}^l)\mu(d(\mathbf{x}_{p-2}^l,\psi_{p-1}^l))} \\
\overline{G}_{p-1,\mu}^{l-1}(\mathbf{x}_{p-2}^l,\psi_{p-1}^l) & = & \frac{g_{p-1}(\overline{x}_{p-1}^{l-1})R_{p-2,p-1}^l(\mathbf{x}_{p-2}^l,\psi_{p-1}^l)}{\int_{\mathsf{E}_{p-1}}g_{p-1}(\overline{x}_{p-1}^{l-1})R_{p-2,p-1}^l(\mathbf{x}_{p-2}^l,\psi_{p-1}^l)\mu(d(\mathbf{x}_{p-2}^l,\psi_{p-1}^l))}
\end{eqnarray*}
and $\delta_{\{\mathbf{x}\}}(d\tilde{\mathbf{x}})$ is the Dirac measure.
The probability measure $\check{\Theta}_p^{l}(\mu)$ represents the resampling (maximal coupling) and sampling operation in Algorithm \ref{alg:mlpf} and was first derived in \cite{coup_clt}. We will explain, below, why it has been introduced.

We define the following sequence of probability 
measures on $\mathsf{E}_1,\dots,\mathsf{E}_T$: 
$$
\check{\eta}_1^l(d(\tilde{\mathbf{x}}_0^l,\psi_1^l)) = \check{M}_p^l((x_0,x_0),d\psi_1^l)\delta_{\{x_0,x_0\}}(d\tilde{\mathbf{x}}_0^l)
$$ 
and then for $p\in\{1,\dots,T-1\}$
$$
\check{\eta}_{p+1}^l\Big((d(\tilde{\mathbf{x}}_{p}^l,\psi_{p+1}^l))\Big)=\check{\Theta}^l_p(\check{\eta}_{p}^l)\Big((d(\tilde{\mathbf{x}}_{p}^l,\psi_{p+1}^l))\Big).
$$ 
It is easily checked that for any $p\in\{1,\dots,T\}$ the marginal of $\check{\eta}_{p}^l$ in the $x_p^l$ co-ordinate is $\eta_p^l$. In addition, one can show that for $(\varphi,p)\in\mathcal{B}_b(\mathsf{X})\times\{1,\dots,T\}$
$$
\pi^{l-1}_p(\varphi) = \frac{\int_{\mathsf{E}_p}\varphi(\overline{x}_p^{l-1})R_{p-1,p}^l(\mathbf{x}_{p-1}^l,\psi_p^l)g_p(\overline{x}_{p}^{l-1})\check{\eta}^l_p(d(\mathbf{x}_{p-1}^l,\psi_p^l))}{\int_{\mathsf{E}_p}R_{p-1,p}^l(\mathbf{x}_{p-1}^l,\psi_p^l)g_p(\overline{x}_{p}^{l-1})\check{\eta}^l_p(d(\mathbf{x}_{p-1}^l,\psi_p^l))}.
$$
Now consider Algorithm \ref{alg:mlpf} run at a level $l\in\{2,3,\dots\}$ and for $(\varphi,p)\in\mathcal{B}_b(\mathsf{X})\times\{1,\dots,T\}$ and define
\begin{eqnarray*}
\pi_p^{S_l,l}(\varphi) & := & \frac{\tfrac{1}{S_l}\sum_{i=1}^{S_l}\varphi(X_p^{i,l})g_p(X_p^{i,l})}{\tfrac{1}{S_l}\sum_{i=1}^{S_l}
g_p(X_p^{i,l})} \\
\overline{\pi}_p^{S_l,l-1}(\varphi) & := & \frac{\tfrac{1}{S_l}\sum_{i=1}^{S_l}\varphi(\overline{X}_p^{i,l-1})
R_{p-1,p}^l(\mathbf{X}_{p-1}^{i,l},\Psi_p^{i,l})
g_p(\overline{X}_p^{i,l-1})}{\tfrac{1}{S_l}\sum_{i=1}^{S_l}
R_{p-1,p}^l(\mathbf{X}_{p-1}^{i,l},\Psi_p^{i,l})
g_p(\overline{X}_p^{i,l-1})}.
\end{eqnarray*}
where we have used the notation $\Psi_p^{i,l}=(\Xi_{p-1,p}^{i,l},\mathbf{X}_p^{i,l})$.
Then the estimate of the level difference $\pi_p^l(\varphi)-\pi_p^{l-1}(\varphi)$ is then $\pi_p^{S_l,l}(\varphi)-\overline{\pi}_p^{S_l,l-1}(\varphi)$. 

The main barrier to a complete analysis of the estimate $\pi_p^{S_l,l}(\varphi)-\overline{\pi}_p^{S_l,l-1}(\varphi)$ is the complicated structure of the operator $\check{\Theta}_p^l$. Essentially, for $\kappa>0$,  to provide a bound (i.e.~$\mathcal{O}(S_l^{-\kappa/2})$) on quantities of the type
\begin{equation}\label{eq:exp_exp}
\mathbb{\overline{E}}\left[\left|\frac{1}{S_l}\sum_{i=1}^{S_l}\varphi(\overline{X}_p^{i,l-1})
R_{p-1,p}^l(\mathbf{X}_{p-1}^{i,l},\Psi_p^{i,l})
g_p(\overline{X}_p^{i,l-1})-\int_{\mathsf{E}_p}\varphi(\overline{x}_p^{l-1})R_{p-1,p}^l(\mathbf{x}_{p-1}^l,\psi_p^l)g_p(\overline{x}_{p}^{l-1})\check{\eta}^l_p(d(\mathbf{x}_{p-1}^l,\psi_p^l))\right|^{\kappa}\right]
\end{equation}
one typically works via a proof by induction on $p$. The case $p=1$ is typically straightforward using the Marcinkiewicz-Zygmund inequality, at least when $\kappa\geq 2$, but the case $\kappa\in(0,2)$ can be recovered via the bound when $\kappa=2$ and Jensen. The induction is particularly troublesome as one typically will add and subtract the conditional expectation:
$$
\mathbb{\overline{E}}\left[\frac{1}{S_l}\sum_{i=1}^{S_l}\varphi(\overline{X}_p^{i,l-1})
R_{p-1,p}^l(\mathbf{X}_{p-1}^{i,l},\Psi_p^{i,l})
g_p(\overline{X}_p^{i,l-1})\Big|\mathscr{F}_{p-1}^{S^l,l}\right]
$$
in the argument $|\cdot|^{\kappa}$ in the expectation operator and where $\mathscr{F}_{p-1}^{S^l,l}$ is the natural filtration generated by the particle system at time $p-1$, (after sampling, we use $\mathscr{F}_{0}^{S^l,l}$ to denote the trivial $\sigma-$ field). This latter conditional expectation can be written exactly in terms of the operator $\check{\Theta}_p^l$ where the input measure is a particular empirical measure associated to the particle system; we omit the exact details for brevity. The problem therein lies with the fact that  $\check{\Theta}_p^l$ will depend on a product measure of the afore-mentioned empirical measure and as such the induction hypothesis (that the expectation in \eqref{eq:exp_exp} is bounded by a term $\mathcal{O}(S_l^{-\kappa/2})$) is invalidated. This was realized in the work of \cite{coup_clt} where a limit theorem (convergence in probability) is given (see \cite[Theorem 3.1]{coup_clt}). To overcome the issue of the product measure, a density argument based upon the Stone-Wierstrass theorem is used. That argument can be extended to the case of the algorithm in this article, \emph{only} if $R_{p-1,p}^l$ is continuous and bounded; as established in \eqref{r:bound} the latter condition does not hold. As a result as it is unclear how to extend the proofs in \cite{coup_clt}, we decide to make an additional assumption, which are results that need to be proved, in order to provide formal bounds. This assumption is as follows and we denote convergence in probability as $S_l\rightarrow+\infty$ as $\rightarrow_{\mathbb{P}}$.

\begin{hypA}\label{ass:5}
For any $(\varphi,p,l)\in\mathcal{B}_b(\mathsf{X})\cap\textrm{Lip}(\mathsf{X})\times\{1,\dots,T\}\times\{2,3\dots\}$
$$
\mathbb{\overline{E}}\left[\frac{1}{S_l}\sum_{i=1}^{S_l}\varphi(\overline{X}_p^{i,l-1})
R_{p-1,p}^l(\mathbf{X}_{p-1}^{i,l},\Psi_p^{i,l})
g_p(\overline{X}_p^{i,l-1})\Big|\mathscr{F}_{p-1}^{S^l,l}\right] \rightarrow_{\mathbb{P}}
$$
$$
\int_{\mathsf{E}_p}\varphi(\overline{x}_p^{l-1})R_{p-1,p}^l(\mathbf{x}_{p-1}^l,\psi_p^l)g_p(\overline{x}_{p}^{l-1})\check{\eta}^l_p(d(\mathbf{x}_{p-1}^l,\psi_p^l)).
$$
\end{hypA}
The reason that we express the assumption as a convergence in probability, instead of a bound on a moment (which is what is needed) is that we expect that, following the arguments in \cite[Theorem 3.1]{coup_clt} that (A\ref{ass:5}) can be proved. However, we would expect that to prove a bound on a moment would be extremely arduous and thus the form of the assumption given.

\begin{lem}\label{lem:6}
Assume (A\ref{ass:1}-\ref{ass:5}).  Then for any $(\varphi,p,\kappa)\in\mathcal{B}_b(\mathsf{X})\cap\textrm{\emph{Lip}}(\mathsf{X})\times\{1,\dots,T\}\times(0,\infty)$ there exists a $C<+\infty$ such that for any 
$\epsilon>0$ there exists a $S_l^{\epsilon}\in\mathbb{N}$ so that for
$l\in\{2,3,\dots\}$:
$$
\mathbb{\overline{E}}\left[|
\pi_p^{S_l^{\epsilon},l}(\varphi)-\overline{\pi}_p^{S_l^{\epsilon},l-1}(\varphi) - \{\pi_p^l(\varphi)-\pi_p^{l-1}(\varphi)\}
|^{\kappa}\right]^{1/\kappa}
\leq C\left(\frac{\Delta_l^{\kappa^{-1}\wedge 1}}{\sqrt{S_l^{\epsilon}}}+\epsilon\right).
$$
\end{lem}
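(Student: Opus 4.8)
The plan is to strip off the self-normalisation, run an induction over the time index $p$ that tracks the set $\mathsf{S}_{p-1}^l$ of still-coupled particles, and use (A\ref{ass:5}) to close the one step of the induction at which the argument of \cite{coup_clt} breaks down.

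First I would reduce to unnormalised quantities. Writing $\pi_p^{S_l,l}(\varphi)=\widehat{A}_l/\widehat{B}_l$ with $\widehat{A}_l=\tfrac{1}{S_l}\sum_i\varphi(X_p^{i,l})g_p(X_p^{i,l})$, $\widehat{B}_l=\tfrac{1}{S_l}\sum_i g_p(X_p^{i,l})$, and $\pi_p^l(\varphi)=\eta_p^l(\varphi g_p)/\eta_p^l(g_p)$, the identity $\eta_p^l(\varphi g_p)-\pi_p^l(\varphi)\eta_p^l(g_p)=0$ gives
$$
\pi_p^{S_l,l}(\varphi)-\pi_p^l(\varphi)=\frac{1}{\widehat{B}_l}\cdot\frac{1}{S_l}\sum_{i=1}^{S_l}\big(\varphi(X_p^{i,l})-\pi_p^l(\varphi)\big)g_p(X_p^{i,l}),
$$
and the analogous identity for $\overline{\pi}_p^{S_l,l-1}(\varphi)-\pi_p^{l-1}(\varphi)$ with $\overline{X}_p^{i,l-1}$, the weight $R_{p-1,p}^{i,l}$ and the measure $\check{\eta}_p^l$ (recall the representation of $\pi_p^{l-1}$ via $\check{\eta}_p^l$). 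Subtracting these, and bounding the random denominators away from $0$ by (A\ref{ass:2}) together with the a.s.\ lower bound $R_{p-1,p}^l\ge\underline{c}^{\,N^{\star,l}_{[p-1,p]}}$ from \eqref{r:bound} (the small-probability event on which a denominator is tiny is absorbed by Cauchy--Schwarz against the exponential moments of $g_p$ and of $\overline{c}^{\,N^{\star}}$, as in \cite[Lemma A.3]{beskos}), it suffices to control in $L^\kappa(\mathbb{\overline{E}})$ the quantities
$$
D_p(\phi):=\frac{1}{S_l}\sum_{i=1}^{S_l}\Big(\phi(X_p^{i,l})g_p(X_p^{i,l})-\phi(\overline{X}_p^{i,l-1})R_{p-1,p}^{i,l}g_p(\overline{X}_p^{i,l-1})\Big)-\Big(\eta_p^l(\phi g_p)-\check{\eta}_p^l\big(\phi(\overline{x}_p^{l-1})R_{p-1,p}^l g_p\big)\Big)
$$
for $\phi$ ranging over $1$, $\varphi-\pi_p^l(\varphi)$, $\varphi-\pi_p^{l-1}(\varphi)$, all in $\mathcal{B}_b(\mathsf{X})\cap\textrm{Lip}(\mathsf{X})$ by (A\ref{ass:2}).

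Next I would induct on $p$. For $p=1$ there is no resampling, $\mathbf{X}_0^{i,l}=(x_0,x_0)$, so the summands of $D_1(\phi)$ are i.i.d.\ with mean zero and, by (A\ref{ass:3}) applied with $|x-\overline{x}|=0$, have $L^\kappa$-norm $O(\Delta_l)$; the Marcinkiewicz--Zygmund inequality (and $\|\cdot\|_\kappa\le\|\cdot\|_1$ for $\kappa<1$) gives $\|D_1(\phi)\|_\kappa\le C\Delta_l/\sqrt{S_l}$. For the inductive step, condition on $\mathscr{F}_{p-1}^{S_l,l}$ and split $D_p(\phi)$ into the conditional fluctuation $D_p(\phi)-\mathbb{\overline{E}}[\,\cdot\mid\mathscr{F}_{p-1}^{S_l,l}]$ and the conditional bias $\mathbb{\overline{E}}[\,\cdot\mid\mathscr{F}_{p-1}^{S_l,l}]-(\text{limit})$. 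The fluctuation is split along $\mathsf{S}_{p-1}^l$: on $i\in\mathsf{S}_{p-1}^l$ the two particles are propagated from a common ancestor, so Lemma \ref{lem:3} controls the conditional $\kappa$-th moments of those summands on average by $C\Delta_l^\kappa$, and the conditional Marcinkiewicz--Zygmund inequality yields an $L^\kappa$ contribution $\le C\Delta_l/\sqrt{S_l}$; on $i\notin\mathsf{S}_{p-1}^l$ each summand is a.s.\ at most $C(1+\overline{c}^{\,N^{i,\star,l}_{[p-1,p]}})$ by (A\ref{ass:2}) and \eqref{r:bound}, there are $S_l-\textrm{Card}(\mathsf{S}_{p-1}^l)$ of them, and combining Lemma \ref{lem:4} (which gives $\mathbb{\overline{E}}[1-\textrm{Card}(\mathsf{S}_{p-1}^l)/S_l]\le C\Delta_l$) with a Hölder split that isolates the indicator $\mathbb{I}_{\{i\notin\mathsf{S}_{p-1}^l\}}$ from the exponential weight produces an $L^\kappa$ contribution $\le C\Delta_l^{\kappa^{-1}\wedge1}/\sqrt{S_l}$ — the cap at $1$ in the exponent is exactly the inequality $x^\kappa\le x$ on $[0,1]$ applied to $1-\textrm{Card}(\mathsf{S}_{p-1}^l)/S_l$ when $\kappa>1$. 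The part of the conditional bias that remembers the time-$(p-1)$ system only through the Lipschitz dependence of the one-step maximal-coupling--propagation map on its input measure is handled by the induction hypothesis together with Lemmata \ref{lem:2}--\ref{lem:3}.

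The one remaining piece of the conditional bias is the genuine obstruction: $\mathbb{\overline{E}}[D_p(\phi)\mid\mathscr{F}_{p-1}^{S_l,l}]$ can be expressed through the operator $\check{\Theta}_p^l$ applied to the empirical measure of the time-$(p-1)$ system, and $\check{\Theta}_p^l$ involves a \emph{product} of that empirical measure with itself (the residual of the maximal coupling), so the naive induction is invalidated. The density argument of \cite[Theorem 3.1]{coup_clt} that repairs this requires $R_{p-1,p}^l$ to be bounded and continuous, which by \eqref{r:bound} it is not; this is precisely why (A\ref{ass:5}) is assumed. Since all the quantities that enter are a.s.\ dominated, uniformly in $S_l$, by integrable random variables built from the exponential moments of $\overline{c}^{\,N^{\star}}$ (by (A\ref{ass:2}) and \eqref{r:bound}), the convergence in probability granted by (A\ref{ass:5}) — applied with $\varphi$, with $\varphi\equiv1$, and with $\varphi-\pi_p^{l-1}(\varphi)$ — upgrades to convergence in $L^\kappa$; hence, given $\epsilon>0$, for each $l$ one can pick $S_l^\epsilon$ large enough that this conditional-bias contribution is $\le C\epsilon$. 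Collecting the $O(\Delta_l^{\kappa^{-1}\wedge1}/\sqrt{S_l^\epsilon})$ contributions of the previous paragraph (using $\Delta_l\le\Delta_l^{\kappa^{-1}\wedge1}$ since $\Delta_l\le1$) with the $O(\epsilon)$ conditional-bias term, and re-inserting the bounded denominators, closes the induction and yields the stated bound. The hard part is exactly this last step: the product-measure structure of $\check{\Theta}_p^l$ together with the unboundedness of $R_{p-1,p}^l$, which both forces the detour through (A\ref{ass:5}) and, via the Hölder splittings above, is responsible for the capped exponent $\kappa^{-1}\wedge1$.
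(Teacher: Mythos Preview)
Your overall strategy---control the conditional fluctuation by a Marcinkiewicz--Zygmund argument that feeds on the rate lemmata, and close the conditional bias using (A\ref{ass:5}) together with uniform integrability---is exactly what the paper does. The differences are in execution, and two of them matter.

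\textbf{The self-normalisation reduction.} Your claim that ``it suffices to control $D_p(\phi)$ for $\phi\in\{1,\varphi-\pi_p^l(\varphi),\varphi-\pi_p^{l-1}(\varphi)\}$'' hides nontrivial algebra. Writing the target as $\widetilde A_l/\widehat B_l-\widetilde A_{l-1}/\widehat B_{l-1}$ with the centered numerators $\widetilde A_s$, the two ratios have \emph{different} denominators and \emph{different} centerings, so simply bounding the denominators away from zero destroys the $\Delta_l$ rate (each $\widetilde A_s$ is only $O(S_l^{-1/2})$, not $O(\Delta_l S_l^{-1/2})$). To recover the rate you must expand the difference of ratios and track the cross terms, e.g.\ a term of the form $(\pi_p^l(\varphi)-\pi_p^{l-1}(\varphi))(\widehat B_{l-1}-B_{l-1})$ appears, which needs Lemma~\ref{lem:8} and a separate convergence argument for $\widehat B_{l-1}$. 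The paper does this cleanly by invoking the explicit six-term decomposition of \cite[Lemma C.5]{mlpf}, which produces terms $E_1,\dots,E_6$ that isolate exactly these cross contributions; $E_1,E_2$ are handled by MZ plus Lemma~\ref{lem:5} and (A\ref{ass:5}), while $E_3$--$E_6$ are products of one ``small'' deterministic factor (of size $\Delta_l$) with a term that merely converges in probability.

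\textbf{The induction on $p$.} The paper does not induct here. It applies the conditional MZ inequality once and plugs in Lemma~\ref{lem:5} (the single-particle moment bound, which already encapsulates the $\mathsf{S}_{p-1}^l$/complement bookkeeping via Lemmata~\ref{lem:3}--\ref{lem:4}). Your split of the fluctuation along $\mathsf{S}_{p-1}^l$ is essentially re-deriving Lemma~\ref{lem:5} inside this proof, and your appeal to an ``induction hypothesis'' for the linear part of the conditional bias is never made precise---what statement at rank $p-1$ are you assuming, and why does it give what you need after passing through $\check\Theta_p^l$? In the paper the entire conditional bias is handled in one shot: standard particle-filter convergence for the level-$l$ piece (e.g.\ \cite[Proposition C.6]{mlpf}) plus (A\ref{ass:5}) for the level-$(l-1)$ piece, upgraded to $L^\kappa$ by the uniform $\kappa(1+\overline\kappa)$-moment bound coming from \eqref{r:bound}. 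No induction and no linear/quadratic split of $\check\Theta_p^l$ is needed.

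In short: same engine, but your reduction step glosses over the cross terms that the paper's six-term decomposition handles explicitly, and your induction layer is both unnecessary and under-specified.
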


\begin{proof}
Throughout, we assume $\kappa\geq 2$; the case $\kappa\in(0,2)$ can be obtained by using the bound when $\kappa=2$ and applying Jensen's inequality.
We begin by noting the following decomposition, which can be established via \cite[Lemma C.5.]{mlpf}, for any $S_l\in\mathbb{N}$:
\begin{equation}\label{eq:decomp_lp}
\pi_p^{S_l,l}(\varphi)-\overline{\pi}_p^{S_l,l-1}(\varphi) - \{\pi_p^l(\varphi)-\pi_p^{l-1}(\varphi)\} = \sum_{j=1}^6 E_j
\end{equation}
where
\begin{eqnarray*}
E_1 & = & \frac{1}{\tfrac{1}{S_l}\sum_{i=1}^{S_l} g_p(X_p^{i,l})}\Bigg(
\frac{1}{S_l}\sum_{i=1}^{S_l}\varphi(X_p^{i,l})g_p(X_p^{i,l}) - 
\frac{1}{S_l}\sum_{i=1}^{S_l}\varphi(\overline{X}_p^{i,l-1})
R_{p-1,p}^l(\mathbf{X}_{p-1}^{i,l},\Psi_p^{i,l})
g_p(\overline{X}_p^{i,l-1}) - \\ & &  \Big\{
\int_{\mathsf{X}}\varphi(x)g_p(x)\eta_p^l(dx) -
\int_{\mathsf{E}_p}\varphi(\overline{x}_p^{l-1})R_{p-1,p}^l(\mathbf{x}_{p-1}^l,\psi_p^l)g_p(\overline{x}_{p}^{l-1})\check{\eta}^l_p(d(\mathbf{x}_{p-1}^l,\psi_p^l))
\Big\}
\Bigg)\\
E_2 & = & -\frac{\tfrac{1}{S_l}\sum_{i=1}^{S_l}\varphi(\overline{X}_p^{i,l-1})
R_{p-1,p}^l(\mathbf{X}_{p-1}^{i,l},\Psi_p^{i,l})
g_p(\overline{X}_p^{i,l-1})}{\{\tfrac{1}{S_l}\sum_{i=1}^{S_l} g_p(X_p^{i,l})\}\{
\tfrac{1}{S_l}\sum_{i=1}^{S_l}
R_{p-1,p}^l(\mathbf{X}_{p-1}^{i,l},\Psi_p^{i,l})
g_p(\overline{X}_p^{i,l-1})
\}}
\Bigg(\frac{1}{S_l}\sum_{i=1}^{S_l}g_p(X_p^{i,l}) - \\ & &
\frac{1}{S_l}\sum_{i=1}^{S_l}
R_{p-1,p}^l(\mathbf{X}_{p-1}^{i,l},\Psi_p^{i,l})
g_p(\overline{X}_p^{i,l-1}) -  
 \Big\{
\int_{\mathsf{X}}g_p(x)\eta_p^l(dx) -
\int_{\mathsf{E}_p}R_{p-1,p}^l(\mathbf{x}_{p-1}^l,\psi_p^l)g_p(\overline{x}_{p}^{l-1})\times \\ & & \check{\eta}^l_p(d(\mathbf{x}_{p-1}^l,\psi_p^l))
\Big\}
\Bigg)
\\
E_3 & = & \frac{1}{\{\tfrac{1}{S_l}\sum_{i=1}^{S_l} g_p(X_p^{i,l})\}\int_{\mathsf{X}}g_p(x)\eta_p^l(dx)}
\Big(\int_{\mathsf{X}}g_p(x)\eta_p^l(dx)-\frac{1}{S_l}\sum_{i=1}^{S_l} g_p(X_p^{i,l})\Big)
\Big(
\int_{\mathsf{X}}\varphi(x)g_p(x)\eta_p^l(dx) - \\ & & \int_{\mathsf{E}_p}\varphi(\overline{x}_p^{l-1})R_{p-1,p}^l(\mathbf{x}_{p-1}^l,\psi_p^l)g_p(\overline{x}_{p}^{l-1})\check{\eta}^l_p(d(\mathbf{x}_{p-1}^l,\psi_p^l))
\Big)\\
E_4 &= & -\frac{1}{\{\tfrac{1}{S_l}\sum_{i=1}^{S_l} g_p(X_p^{i,l})\}\{
\tfrac{1}{S_l}\sum_{i=1}^{S_l}
R_{p-1,p}^l(\mathbf{X}_{p-1}^{i,l},\Psi_p^{i,l})
g_p(\overline{X}_p^{i,l-1})
\}}\Big(\frac{1}{S_l}\sum_{i=1}^{S_l}\varphi(\overline{X}_p^{i,l-1})
R_{p-1,p}^l(\mathbf{X}_{p-1}^{i,l},\Psi_p^{i,l})
g_p(\overline{X}_p^{i,l-1})- \\ & &
\int_{\mathsf{E}_p}\varphi(\overline{x}_p^{l-1})R_{p-1,p}^l(\mathbf{x}_{p-1}^l,\psi_p^l)g_p(\overline{x}_{p}^{l-1})\check{\eta}^l_p(d(\mathbf{x}_{p-1}^l,\psi_p^l))
\Big)\Big(\int_{\mathsf{X}}g_p(x)\eta_p^l(dx)
-\\ & &\int_{\mathsf{E}_p}R_{p-1,p}^l(\mathbf{x}_{p-1}^l,\psi_p^l)g_p(\overline{x}_{p}^{l-1})\check{\eta}^l_p(d(\mathbf{x}_{p-1}^l,\psi_p^l))
\Big)\\
E_5 & = & \frac{\int_{\mathsf{E}_p}\varphi(\overline{x}_p^{l-1})R_{p-1,p}^l(\mathbf{x}_{p-1}^l,\psi_p^l)g_p(\overline{x}_{p}^{l-1})\check{\eta}^l_p(d(\mathbf{x}_{p-1}^l,\psi_p^l))
}{
\{\tfrac{1}{S_l}\sum_{i=1}^{S_l}
R_{p-1,p}^l(\mathbf{X}_{p-1}^{i,l},\Psi_p^{i,l})
g_p(\overline{X}_p^{i,l-1})\}
\int_{\mathsf{X}}g_p(x)\eta_p^l(dx)\int_{\mathsf{E}_p}R_{p-1,p}^l(\mathbf{x}_{p-1}^l,\psi_p^l)g_p(\overline{x}_{p}^{l-1})\check{\eta}^l_p(d(\mathbf{x}_{p-1}^l,\psi_p^l))
}\times \\ & &\Big(
\frac{1}{S_l}\sum_{i=1}^{S_l}
R_{p-1,p}^l(\mathbf{X}_{p-1}^{i,l},\Psi_p^{i,l})
g_p(\overline{X}_p^{i,l-1}) -
\int_{\mathsf{X}}g_p(x)\eta_p^l(dx)\int_{\mathsf{E}_p}R_{p-1,p}^l(\mathbf{x}_{p-1}^l,\psi_p^l)g_p(\overline{x}_{p}^{l-1})\check{\eta}^l_p(d(\mathbf{x}_{p-1}^l,\psi_p^l))
\Big)\times \\ & &
\Big(\int_{\mathsf{X}}g_p(x)\eta_p^l(dx)
-\int_{\mathsf{E}_p}R_{p-1,p}^l(\mathbf{x}_{p-1}^l,\psi_p^l)g_p(\overline{x}_{p}^{l-1})\check{\eta}^l_p(d(\mathbf{x}_{p-1}^l,\psi_p^l))
\Big)\\
E_6 & = & -\frac{\int_{\mathsf{E}_p}\varphi(\overline{x}_p^{l-1})R_{p-1,p}^l(\mathbf{x}_{p-1}^l,\psi_p^l)g_p(\overline{x}_{p}^{l-1})\check{\eta}^l_p(d(\mathbf{x}_{p-1}^l,\psi_p^l))
}{
\{\tfrac{1}{S_l}\sum_{i=1}^{S_l} g_p(X_p^{i,l})\}
\{\tfrac{1}{S_l}\sum_{i=1}^{S_l}
R_{p-1,p}^l(\mathbf{X}_{p-1}^{i,l},\Psi_p^{i,l})
g_p(\overline{X}_p^{i,l-1})\}\int_{\mathsf{X}}g_p(x)\eta_p^l(dx)
}\Big(\int_{\mathsf{X}}g_p(x)\eta_p^l(dx)-\\ & &\frac{1}{S_l}\sum_{i=1}^{S_l} g_p(X_p^{i,l})\Big)
\Big(\int_{\mathsf{X}}g_p(x)\eta_p^l(dx)
-\int_{\mathsf{E}_p}R_{p-1,p}^l(\mathbf{x}_{p-1}^l,\psi_p^l)g_p(\overline{x}_{p}^{l-1})\check{\eta}^l_p(d(\mathbf{x}_{p-1}^l,\psi_p^l))
\Big).
\end{eqnarray*}
The proof, via Minkowski, can be bounding the $\mathbb{L}_{\kappa}-$norms of each of the terms $E_1,\dots,E_6$. As $E_1$ and $E_2$ are similar we give the proof for $E_1$ only. Similarly $E_3$ (resp.~$E_4$) and $E_6$  (resp.~$E_5$) are almost the same proofs, so we consider $E_3$ (resp.~$E_4$) only.

For the term $E_1$, using (A\ref{ass:2}) and adding and subtracting the conditional expectation:
$$
\overline{\mathbb{E}}\left[\frac{1}{S_l}\sum_{i=1}^{S_l}\varphi(X_p^{i,l})g_p(X_p^{i,l}) - 
\frac{1}{S_l}\sum_{i=1}^{S_l}\varphi(\overline{X}_p^{i,l-1})
R_{p-1,p}^l(\mathbf{X}_{p-1}^{i,l},\Psi_p^{i,l})
g_p(\overline{X}_p^{i,l-1})\Big|\mathscr{F}_{p-1}^{S_l,l}\right]
$$
and applying Minkowski, one has the upper-bound:
$$
\overline{\mathbb{E}}[|E_1|^{\kappa}]^{1/\kappa} \leq C(E_7+E_8)
$$
where
\begin{eqnarray*}
E_7 & = &
\overline{\mathbb{E}}\Bigg[\Bigg|
\frac{1}{S_l}\sum_{i=1}^{S_l}\varphi(X_p^{i,l})g_p(X_p^{i,l}) - 
\frac{1}{S_l}\sum_{i=1}^{S_l}\varphi(\overline{X}_p^{i,l-1})
R_{p-1,p}^l(\mathbf{X}_{p-1}^{i,l},\Psi_p^{i,l})
g_p(\overline{X}_p^{i,l-1}) - \\ & &
\overline{\mathbb{E}}\left[\frac{1}{S_l}\sum_{i=1}^{S_l}\varphi(X_p^{i,l})g_p(X_p^{i,l}) - 
\frac{1}{S_l}\sum_{i=1}^{S_l}\varphi(\overline{X}_p^{i,l-1})
R_{p-1,p}^l(\mathbf{X}_{p-1}^{i,l},\Psi_p^{i,l})
g_p(\overline{X}_p^{i,l-1})\Big|\mathscr{F}_{p-1}^{S_l,l}\right]
\Bigg|^{\kappa}\Bigg]^{1/\kappa} \\
E_8 & = & \overline{\mathbb{E}}\Bigg[\Bigg|
\overline{\mathbb{E}}\left[\frac{1}{S_l}\sum_{i=1}^{S_l}\varphi(X_p^{i,l})g_p(X_p^{i,l}) - 
\frac{1}{S_l}\sum_{i=1}^{S_l}\varphi(\overline{X}_p^{i,l-1})
R_{p-1,p}^l(\mathbf{X}_{p-1}^{i,l},\Psi_p^{i,l})
g_p(\overline{X}_p^{i,l-1})\Big|\mathscr{F}_{p-1}^{S_l,l}\right] -\\
& &
\Big\{
\int_{\mathsf{X}}\varphi(x)g_p(x)\eta_p^l(dx) -
\int_{\mathsf{E}_p}\varphi(\overline{x}_p^{l-1})R_{p-1,p}^l(\mathbf{x}_{p-1}^l,\psi_p^l)g_p(\overline{x}_{p}^{l-1})\check{\eta}^l_p(d(\mathbf{x}_{p-1}^l,\psi_p^l))
\Big\}
\Bigg|^{\kappa}\Bigg]^{1/\kappa}.
\end{eqnarray*}
For the term $E_7$ one can use the conditional Marcinkiewicz-Zygmund inequality and Lemma \ref{lem:5}
to obtain that for any $S_l\in\mathbb{N}$
$$
E_7 \leq \frac{C\Delta_l^{\kappa^{-1}\wedge 1}}{\sqrt{S_l}}.
$$
For $E_8$, as one can prove easily (e.g.~\cite[Proposition C.6]{mlpf}) that 
$$
\overline{\mathbb{E}}\left[\frac{1}{S_l}\sum_{i=1}^{S_l}\varphi(X_p^{i,l})g_p(X_p^{i,l})\Big|\mathscr{F}_{p-1}^{S_l,l}\right]
\rightarrow_{\mathbb{P}} \int_{\mathsf{X}}\varphi(x)g_p(x)\eta_p^l(dx)
$$
and that by the arguments in Appendix \ref{app:rate} (see e.g.~\eqref{r:bound}) that for any $\overline{\kappa}>0$
$$
\sup_{S_l\in\mathbb{N}}\overline{\mathbb{E}}\Bigg[\Bigg|
\overline{\mathbb{E}}\left[\frac{1}{S_l}\sum_{i=1}^{S_l}\varphi(X_p^{i,l})g_p(X_p^{i,l}) - 
\frac{1}{S_l}\sum_{i=1}^{S_l}\varphi(\overline{X}_p^{i,l-1})
R_{p-1,p}^l(\mathbf{X}_{p-1}^{i,l},\Psi_p^{i,l})
g_p(\overline{X}_p^{i,l-1})\Big|\mathscr{F}_{p-1}^{S_l,l}\right] -
$$
$$
\Big\{
\int_{\mathsf{X}}\varphi(x)g_p(x)\eta_p^l(dx) -
\int_{\mathsf{E}_p}\varphi(\overline{x}_p^{l-1})R_{p-1,p}^l(\mathbf{x}_{p-1}^l,\psi_p^l)g_p(\overline{x}_{p}^{l-1})\check{\eta}^l_p(d(\mathbf{x}_{p-1}^l,\psi_p^l))
\Big\}
\Bigg|^{\kappa(1+\overline{\kappa})}\Bigg] <+\infty.
$$
Therefore, via (A\ref{ass:5}) 
$$
\lim_{S_l\rightarrow+\infty}E_8 = 0.
$$
As a result, for any $\epsilon>0$ there exists a $S_l^{\epsilon}\in\mathbb{N}$ so that
$$
E_8\leq \epsilon.
$$
This completes the bound for $E_1$.

For $E_3$, as one can prove easily (e.g.~\cite[Proposition C.6]{mlpf}) that
$$
\Bigg|\int_{\mathsf{X}}g_p(x)\eta_p^l(dx)-\frac{1}{S_l}\sum_{i=1}^{S_l} g_p(X_p^{i,l})\Bigg| \rightarrow_{\mathbb{P}} 0
$$
and that for any $\overline{\kappa}>0$
$$
\sup_{S_l\in\mathbb{N}}\overline{\mathbb{E}}[|E_3|^{\kappa(1+\overline{\kappa})}]<+\infty
$$
thus for any $\epsilon>0$ there exists a $S_l^{\epsilon}\in\mathbb{N}$ so that
$$
\overline{\mathbb{E}}[|E_3|^{\kappa}]^{1/\kappa}\leq \epsilon.
$$
For $E_4$ one can use a combination of the approach for $E_1$ and $E_3$; we do not give details for brevity.
This completes the proof.
\end{proof}

\begin{lem}\label{lem:7}
Assume (A\ref{ass:1}-\ref{ass:5}).  Then for any $(\varphi,p,\kappa)\in\mathcal{B}_b(\mathsf{X})\cap\textrm{\emph{Lip}}(\mathsf{X})\times\{1,\dots,T\}\times(0,\infty)$ there exists a $C<+\infty$ such that for any 
$\epsilon>0$ there exists a $S_l^{\epsilon}\in\mathbb{N}$ so that for
$l\in\{2,3,\dots\}$:
$$
\left|\mathbb{\overline{E}}\left[
\pi_p^{S_l^{\epsilon},l}(\varphi)-\overline{\pi}_p^{S_l^{\epsilon},l-1}(\varphi) - \{\pi_p^l(\varphi)-\pi_p^{l-1}(\varphi)\}
\right]\right|
\leq C\epsilon.
$$
\end{lem}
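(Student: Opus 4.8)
The plan is to obtain Lemma \ref{lem:7} as an essentially immediate corollary of Lemma \ref{lem:6}. The quantity whose expectation we must bound is precisely the one whose $\mathbb{L}_{\kappa}$-norm is controlled there, so the only work is (i) to pass from a bound on the moment of its absolute value to a bound on the absolute value of its mean, and (ii) to absorb the residual Monte Carlo term $\Delta_l/\sqrt{S_l^{\epsilon}}$ into $\epsilon$ by choosing the sample size large enough. No new analytic input (decompositions, rate estimates, (A\ref{ass:5})) is required beyond what already appears in the proof of Lemma \ref{lem:6}.

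Concretely, I would apply Lemma \ref{lem:6} with $\kappa=1$; since $\kappa^{-1}\wedge 1 = 1$ in this case, it furnishes, for each $\epsilon>0$, an integer $\widetilde{S}_l^{\epsilon}$ with
$$
\mathbb{\overline{E}}\left[\left|\pi_p^{\widetilde{S}_l^{\epsilon},l}(\varphi)-\overline{\pi}_p^{\widetilde{S}_l^{\epsilon},l-1}(\varphi) - \{\pi_p^l(\varphi)-\pi_p^{l-1}(\varphi)\}\right|\right] \leq C\left(\frac{\Delta_l}{\sqrt{\widetilde{S}_l^{\epsilon}}}+\epsilon\right).
$$
By Jensen's inequality (equivalently the triangle inequality for the expectation operator), the left-hand side dominates $\big|\mathbb{\overline{E}}[\pi_p^{\widetilde{S}_l^{\epsilon},l}(\varphi)-\overline{\pi}_p^{\widetilde{S}_l^{\epsilon},l-1}(\varphi) - \{\pi_p^l(\varphi)-\pi_p^{l-1}(\varphi)\}]\big|$. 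Finally, since $\Delta_l=2^{-l}\leq 1$, I would take $S_l^{\epsilon}:=\max\{\widetilde{S}_l^{\epsilon},\lceil \epsilon^{-2}\rceil\}$; the bound of Lemma \ref{lem:6} still holds at this value because in that proof $\widetilde{S}_l^{\epsilon}$ arises as a threshold past which the Marcinkiewicz--Zygmund term ($E_7$-type, decaying like $S_l^{-1/2}$) and the vanishing terms ($E_8,E_3,\dots$) are each $\leq\epsilon$, so increasing the sample size only improves the estimate. With this choice $\Delta_l/\sqrt{S_l^{\epsilon}}\leq\epsilon$, hence the expectation in question is bounded by $2C\epsilon$, and relabelling the constant yields the claim.

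There is no real obstacle here; the lemma is the first-moment (bias) specialization of Lemma \ref{lem:6}, needed precisely so that, when the squared multilevel error in Theorem \ref{theo:main_result} is expanded, the cross terms between distinct levels, which are simulated independently, can be factorized as $\mathbb{\overline{E}}[AB]=\mathbb{\overline{E}}[A]\,\mathbb{\overline{E}}[B]$ with each factor bounded by $C\epsilon$. The only point meriting a line of care is the monotonicity-in-$S_l$ remark used to pass to the larger common sample size; as an alternative one could instead re-run the six-term decomposition of Lemma \ref{lem:6} tracking $\mathbb{\overline{E}}[E_j]$ rather than $\mathbb{\overline{E}}[|E_j|^{\kappa}]^{1/\kappa}$, in which case the genuinely fluctuating contribution has zero conditional mean and only the ratio-bias terms — already shown to vanish as $S_l\to\infty$ under (A\ref{ass:5}) and the estimates of Appendix \ref{app:rate} — remain to be made $\leq C\epsilon$.
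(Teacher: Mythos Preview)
Your proposal is correct, and in fact more economical than the paper's argument. The paper does not invoke Lemma \ref{lem:6} directly; instead it re-uses the six-term decomposition \eqref{eq:decomp_lp} and argues (``the arguments are repetitive'') that each $\mathbb{\overline{E}}[E_j]$, $j=1,\dots,6$, can be driven below $\epsilon$ by the same (A\ref{ass:5})/uniform-integrability reasoning already deployed in Lemma \ref{lem:6}. This is precisely the ``alternative'' you outline in your final sentence. Your primary route --- apply Lemma \ref{lem:6} with $\kappa=1$, pass the absolute value outside via $|\mathbb{\overline{E}}[Z]|\leq\mathbb{\overline{E}}[|Z|]$, and enlarge $S_l^{\epsilon}$ so that $\Delta_l/\sqrt{S_l^{\epsilon}}\leq\epsilon$ --- is cleaner and avoids repetition, at the mild cost of the monotonicity-in-$S_l$ remark (which is indeed justified: the $E_7$-type contributions decay like $S_l^{-1/2}$, while the $E_8,E_3,\dots$ contributions are controlled via convergence in $\mathbb{L}_{\kappa}$ and hence are $\leq\epsilon$ for all $S_l$ beyond a threshold). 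The paper's re-decomposition approach, by contrast, makes the bias structure slightly more explicit but offers no additional information.
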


\begin{proof}
One can use the same decomposition \eqref{eq:decomp_lp} and then show, using the approach in the proof of Lemma \ref{lem:6} that the expectation of each of the expressions $E_1,\dots,E_6$ converge to zero. As the arguments are repetitive, they are omitted.
\end{proof}

\subsection{Bias Bounds}\label{app:bias}

\begin{hypA}\label{ass:4}
For any $(\varphi,p)\in\mathcal{B}_b(\mathsf{X})\cap\textrm{Lip}(\mathsf{X})\times\{1,\dots,T\}$
there exists a $C<+\infty$ such that for any $(l,x)\in\mathbb{N}\times\mathsf{X}$:
$$
\left|\int_{\mathsf{X}}\varphi(\overline{x})M_p(x,d\overline{x})-\int_{\mathsf{X}}\varphi(\overline{x})M_p^l(x,d\overline{x})\right| \leq C\Delta_l.
$$
\end{hypA}

On the basis of \cite[Theorem 4.1.]{lemaire} this seems a very reasonable assumption.

\begin{lem}\label{lem:8}
Assume (A\ref{ass:2},\ref{ass:4}).   Then for any $(\varphi,p)\in\mathcal{B}_b(\mathsf{X})\cap\textrm{\emph{Lip}}(\mathsf{X})\times\{1,\dots,T\}$ there exists a $C<+\infty$ such that for any $l\in\mathbb{N}$:
$$
\left|\pi_p(\varphi)-\pi_p^l(\varphi)\right|\leq C\Delta_l.
$$
\end{lem}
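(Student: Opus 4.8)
The plan is to prove this by the classical stability (robustness) argument for normalised Feynman--Kac formulae, reducing to the unnormalised flow and exchanging the exact one--step kernels $M_q$ for their discretisations $M_q^l$ one step at a time. Write $g_q(\cdot)=g(\cdot,y_q)$ and, for $\psi\in\mathcal B_b(\mathsf X)$, set
$$
\gamma_p(\psi)=\int_{\mathsf X^p}\psi(x_p)\Big\{\prod_{q=1}^p g_q(x_q)\Big\}\prod_{q=1}^p M_q(x_{q-1},dx_q),
$$
with $x_0$ the fixed initial point, and let $\gamma_p^l$ be the same object with every $M_q$ replaced by $M_q^l$, so that $\pi_p(\varphi)=\gamma_p(\varphi)/\gamma_p(1)$ and $\pi_p^l(\varphi)=\gamma_p^l(\varphi)/\gamma_p^l(1)$. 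A routine manipulation of the difference of two ratios gives
$$
\big|\pi_p(\varphi)-\pi_p^l(\varphi)\big|\leq\frac{1}{\gamma_p(1)}\Big(\big|\gamma_p(\varphi)-\gamma_p^l(\varphi)\big|+\|\varphi\|_\infty\,\big|\gamma_p(1)-\gamma_p^l(1)\big|\Big).
$$
By (A\ref{ass:2}), each $g_q\geq\underline C>0$ and each $M_q^l$ is a Markov kernel, hence $\gamma_p(1)\geq\underline C^{\,p}\geq\underline C^{\,T}>0$ uniformly in $l$; so it suffices to bound $|\gamma_p(\psi)-\gamma_p^l(\psi)|$ for $\psi\in\{1,\varphi\}$, both of which lie in $\mathcal B_b(\mathsf X)\cap\textrm{Lip}(\mathsf X)$.

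Next I would telescope over the kernel that is swapped. For $1\leq k\leq p$, the quantity obtained by using $M_1^l,\dots,M_k^l$ for the first $k$ steps and $M_{k+1},\dots,M_p$ afterwards differs from the one using $M_1^l,\dots,M_{k-1}^l$ for the first $k-1$ steps and $M_k,\dots,M_p$ afterwards only at step $k$, so
$$
\gamma_p(\psi)-\gamma_p^l(\psi)=\sum_{k=1}^p\int_{\mathsf X}\nu_{k-1}^l(dx)\,[M_k-M_k^l]\big(x,\Psi_k\big),
$$
where $\nu_{k-1}^l$ is the finite positive measure $\int_{\mathsf X^{k-2}}\prod_{q=1}^{k-1}g_q(x_q)M_q^l(x_{q-1},dx_q)$, of total mass at most $\overline C^{\,k-1}\leq\overline C^{\,T}$ by (A\ref{ass:2}), and $\Psi_k$ is the ``cost--to--go'' function defined by the backward recursion $\Psi_p=g_p\psi$, $\Psi_{k-1}=g_{k-1}\,M_k\Psi_k$ (which does not depend on $l$).

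It then remains to check that each $\Psi_k$ belongs to $\mathcal B_b(\mathsf X)\cap\textrm{Lip}(\mathsf X)$ with constants depending only on $\psi$ and $p\leq T$. Boundedness is immediate from (A\ref{ass:2}). The Lipschitz property follows by backward induction: $\Psi_p=g_p\psi$ is bounded Lipschitz since $g_p$ and $\psi$ are (by (A\ref{ass:2})); and if $\Psi_k$ is bounded Lipschitz, then so is $M_k\Psi_k$ because the exact PDMP transition kernel $M_k$ maps $\mathcal B_b(\mathsf X)\cap\textrm{Lip}(\mathsf X)$ into itself, whence $\Psi_{k-1}=g_{k-1}\,M_k\Psi_k$ is bounded Lipschitz as a product of bounded Lipschitz functions. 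Granting this, (A\ref{ass:4}) applied with test function $\Psi_k$ yields $|[M_k-M_k^l](x,\Psi_k)|\leq C\Delta_l$ for every $x\in\mathsf X$, and since $\nu_{k-1}^l$ has total mass at most $\overline C^{\,T}$, each summand above is $\mathcal O(\Delta_l)$; summing the $p\leq T$ terms gives $|\gamma_p(\psi)-\gamma_p^l(\psi)|\leq C\Delta_l$, and substituting into the first display completes the proof.

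The only non--routine point is the claim that $M_k$ preserves $\mathcal B_b(\mathsf X)\cap\textrm{Lip}(\mathsf X)$ with a controlled Lipschitz constant, and I expect this to be the main obstacle. It should be obtainable from a synchronous coupling of two copies of the exact PDMP started from nearby points --- sharing the rate-$\lambda^\star$ Poisson clock together with the thinning and jump uniforms --- combined with the Lipschitz dependence of the exact flow $\Phi$ on its initial condition (which follows from (A\ref{ass:1}) on letting $l\to\infty$) and the exponential moments of the number of jumps on a unit interval; the residual events on which the two thinning decisions disagree have probability controlled by the separation of the starting points. Everything else --- the ratio identity, the telescoping, the uniform mass bounds from (A\ref{ass:2}), and the single invocation of (A\ref{ass:4}) per step --- is bookkeeping.
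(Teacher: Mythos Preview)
Your approach is essentially the one the paper defers to: the paper's own proof is simply ``This follows along the lines of the proof of \cite[Lemma D.2.]{mlpf} and is hence omitted,'' and that reference uses precisely the unnormalised Feynman--Kac telescoping you describe --- pass to $\gamma_p,\gamma_p^l$, swap $M_k$ for $M_k^l$ one step at a time, and apply the one--step weak error bound (A\ref{ass:4}) to the backward ``cost--to--go'' functions $\Psi_k$.

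You have also correctly isolated the only substantive step, namely that $M_k$ maps $\mathcal B_b(\mathsf X)\cap\mathrm{Lip}(\mathsf X)$ to itself. One caution: your synchronous--coupling sketch implicitly needs Lipschitz regularity of $\lambda$ (so that the probability the two thinning decisions disagree is controlled by the separation of the flows) and some continuity of the jump kernel $Q$ in its first argument; neither is part of (A\ref{ass:2}), which only gives two--sided bounds. The flow part is fine under the standing hypothesis (D\ref{hyp:d1}) via Gr\"onwall, so you do not need to pass through (A\ref{ass:1}) and take $l\to\infty$. In short, the bookkeeping is right and matches the paper, but the Feller/Lipschitz stability of $M_k$ really requires a little more structural regularity on $\lambda$ (and $Q$) than (A\ref{ass:2},\ref{ass:4}) alone provide; the paper sweeps this under the citation.
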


\begin{proof}
This follows along the lines of the proof of \cite[Lemma D.2.]{mlpf} and is hence omitted.
\end{proof}

\end{document}